\documentclass[11pt, draftcls, onecolumn]{IEEEtran}
\IEEEoverridecommandlockouts 
\usepackage{float}
\usepackage{graphicx}
\usepackage{subfigure}
\usepackage{color}
\usepackage{epsfig}
\usepackage{amssymb}
\usepackage{amsmath}
\usepackage{amsthm}
\usepackage{latexsym}
\usepackage{setspace}
\usepackage{bbm}
\usepackage{flushend}
\usepackage{dsfont}
\usepackage{xspace}
\usepackage[ruled,vlined]{algorithm2e}


\newcommand{\dP}{\mathrm{P}}
\newcommand{\dQ}{\mathrm{Q}}
\newcommand{\bPP}[1]{{\dP_{#1}}}
\newcommand{\bQQ}[1]{{\dQ_{#1}}}
\newcommand{\bPr}[1]{{\mathrm{P}}\left(#1\right)}

\newcommand{\bP}[2]{\mathrm{P}_{#1}\left({#2}\right)}
\newcommand{\bQ}[2]{\mathrm{Q}_{#1}\left({#2}\right)}

\newcommand{\unif}{\bPP {\mathtt{unif}}}
\newcommand{\mc}{-\!\!\!\!\circ\!\!\!\!-}

\newcommand{\cA}{{\mathcal A}}

\newcommand{\cE}{{\mathcal E}}
\newcommand{\cF}{{\mathcal F}}

\newcommand{\cI}{{\mathcal I}}

\newcommand{\cK}{{\mathcal K}}

\newcommand{\cT}{{\mathcal T}}

\newcommand{\cV}{{\mathcal V}}

\newcommand{\cX}{{\mathcal X}}
\newcommand{\cY}{{\mathcal Y}}
\newcommand{\cZ}{{\mathcal Z}}

\newcommand{\bF}{\mathbf{F}}

\newcommand{\ep}{\epsilon}
\newcommand{\la}{\lambda}

\usepackage{color}


\newtheorem{theorem}{Theorem}
\newtheorem{proposition}[theorem]{Proposition}
\newtheorem{corollary}[theorem]{Corollary}
\newtheorem*{corollary*}{Corollary}

\newtheorem{lemma}[theorem]{Lemma}
\newtheorem*{lemma*}{Lemmas}

\theoremstyle{remark}
\newtheorem*{remark*}{Remark}
\newtheorem*{remarks*}{Remarks}

\theoremstyle{definition}
\newtheorem{definition}{Definition}
\newtheorem{remark}{Remark}
\newtheorem{example}{Example}

\newenvironment{protocol}[1][htb]
  {%
   \begin{algorithm}[#1]%
  }{\end{algorithm}}


\newcommand{\ttlvrn}[2]{\left\| #1 - #2\right\|_1}

\newcommand{\espc}{\hspace*{-0.5cm}}

\newcommand{\indicator}{{\mathds{1}}}

\newcommand{\two}{{(2)}}
\newcommand{\m}{{(m)}}
\newcommand{\three}{{(3)}}

\newcommand{\slen}{{S_{\ep, \delta}}}
\newcommand{\sleng}{{\slen(X, Y \mid Z)}}

\newcommand{\ie}{{\it i.e.}\xspace}
\newcommand{\cf}{{\it cf.}\xspace}

\newcommand{\bX}{{\bf X}}
\newcommand{\bY}{{\bf Y}}
\newcommand{\bZ}{{\bf Z}}

\newcommand{\id}[2]{{i_{#1}\left(#2\right)}}
\newcommand{\idxy}{\id {XY}{x,y}}
\newcommand{\idXY}{\id {XY}{X,Y}}
\newcommand{\idxz}{\id {XZ}{x,z}}
\newcommand{\idXZ}{\id {XZ}{X,Z}}

\newcommand{\var}{\mathrm{\mathbb{V}ar}}


\newcommand{\hcolor}[1]{#1}



\begin{document}

\title{Secret Key Agreement: General Capacity and Second-Order Asymptotics}

\author{
\IEEEauthorblockN{Masahito Hayashi$^\ast$} 
\and
\IEEEauthorblockN{Himanshu Tyagi$^\dag$} 
\and
\IEEEauthorblockN{Shun Watanabe$^\ddag$} 
}

\maketitle

{\renewcommand{\thefootnote}{}\footnotetext{
\noindent$\ast$The Graduate School of Mathematics, Nagoya University, Japan,  and The Center for
Quantum Technologies, National University of Singapore, Singapore. Email:masahito@math.nagoya-u.ac.jp

\noindent$\ast$Department of Electrical Communication Engineering, Indian
Institute of Science, Bangalore 560012, India. 
Email: htyagi@ece.iisc.ernet.in

\noindent$^\ddag$Department of Computer and Information Sciences, 
Tokyo University of Agriculture and Technology, Tokyo 184-8588, Japan. 
Email: shunwata@cc.tuat.ac.jp

An initial version of this paper was presented at the IEEE International Symposium on Information Theory,  Hawaii, USA, 2014.
}}

\maketitle

\renewcommand{\thefootnote}{\arabic{footnote}}
\setcounter{footnote}{0}

\begin{abstract}
We revisit the problem of secret key agreement 
using interactive public communication for  
two parties and propose a new secret key 
agreement protocol. The protocol attains the secret key
capacity for general observations and attains 
the second-order asymptotic term in the maximum
length of a secret key for independent and identically 
distributed observations.
In contrast to the previously suggested 
secret key agreement protocols, the proposed protocol
uses interactive communication.
In fact, the standard one-way communication protocol
used prior to this work fails to attain the asymptotic 
results above.
Our converse proofs rely on a recently established 
upper bound for secret key lengths. 
Both our lower and upper bounds are derived in 
a {\it single-shot} setup and the asymptotic
results are obtained as corollaries.
 
\end{abstract}
\section{Introduction}\label{s:introduction}
Two parties observing {\it random variables} (RVs) $X$ and $Y$
seek to agree on a secret key. They can communicate interactively
over an error-free, authenticated, albeit insecure, communication channel of 
unlimited capacity. The secret key must be concealed from an eavesdropper with access to the 
communication and an additional side information $Z$.
What is the maximum length $S(X, Y\mid Z)$ of a secret key that the parties can agree upon?

A study of this question was initiated by Maurer \cite{Mau93} and Ahlswede and Csisz{\'ar}
\cite{AhlCsi93} for the case where the observations of the parties and the eavesdropper 
consist of $n$ {\it independent
and identically distributed} (IID) repetitions $(X^n, Y^n, Z^n)$ of RVs $(X, Y, Z)$. For the case
when $X\mc Y \mc Z$ form a Markov chain, it was shown in \cite{Mau93, AhlCsi93}
that the {\it secret key capacity} equals $I(X\wedge Y|Z)$, namely
$$S(X^n, Y^n\mid Z^n ) = nI(X\wedge Y\mid Z) + o(n).$$
However, in several applications (see, for instance, \cite{DodOstReySmi08})
the observed data is not IID or even if the observations are IID, 
the observation length $n$ is limited and a more precise asymptotic analysis 
is needed.  

In this paper, we address the secret key agreement problem for these two important
practical situations. First, when the observations consist of general
sources (\cf \cite{HanVer93, Han03}) $(X_n, Y_n, Z_n)$ such that $X_n \mc Y_n \mc Z_n$ is a Markov chain, 
we show that
$$S(X_n, Y_n\mid Z_n) = n\underline{I}({\bf X}\wedge {\bf Y}\mid \bZ) + o(n),$$
where $\underline{I}({\bf X}\wedge {\bf Y}\mid \bZ)$ is the {\it inf-conditional information} of
${\bf X}$ and ${\bf Y}$ given $\bZ$. Next, for the IID case with $X\mc Y\mc Z$,
we  identify the second-order asymptotic term\footnote{Following the pioneering work of Strassen 
\cite{Str62}, study of these second-order terms in coding theorems has been revived 
recently by Hayashi \cite{Hayashi08, Hay09} and Polyanskiy, Poor, and 
Verd\'u~\cite{PolPooVer10}.} in $S(X^n, Y^n\mid Z^n)$.
Specifically, denoting by $S_{\ep, \delta}(X, Y\mid Z)$ the maximum length 
of a secret key over which the parties agree with probability greater than $1 - \ep$ and 
with secrecy parameter less than $\delta$, we show that
$$S_{\ep, \delta}\left(X^n,Y^n\mid Z^n\right) = nI(X\wedge Y\mid Z) - \sqrt{nV}Q^{-1}(\ep+\delta) \pm {\cal O}(\log n),$$
where $Q$ is the tail probability of the standard Gaussian distribution and
$$V := \var\left[ \log \frac{\bP{XY\mid Z}{X,Y\mid Z}}{\bP{X\mid Z}{X\mid Z} \bP{Y\mid Z}{Y\mid Z}} \right].$$
In particular, our bounds allow us to evaluate the {\it gap to secret key capacity}
at a finite blocklength $n$. In Figure~\ref{f:gap_to_capacity} we illustrate 
this gap between the maximum possible rate of a secret key at a fixed $n$ 
and the secret key capacity
for the case where $Z$ is a random bit, $Y$ is obtained by flipping 
$Z$ with probability $0.25$ and $X$ given by flipping $Y$ with probability
$0.125$; see Example~\ref{e:gap_to_capacity} in Section~\ref{s:second_order} for details.
\begin{figure}[h]
\begin{center}
\includegraphics[scale=0.5]{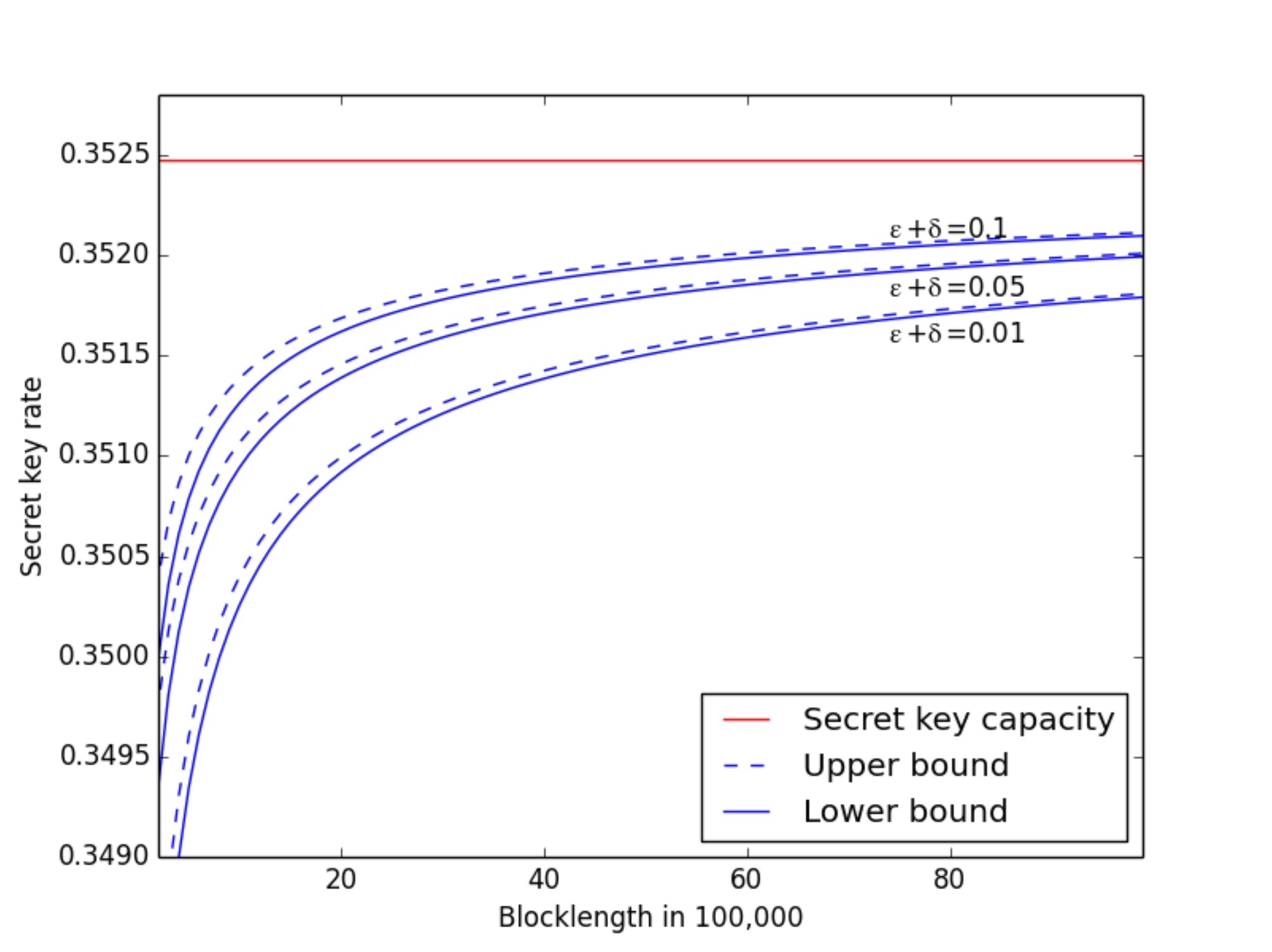}
\caption{Gap to secret key capacity at finite $n$
for $\ep+\delta = 0.01$, $0.05$, $0.1$.}
\label{f:gap_to_capacity}
\end{center}
\end{figure}

Underlying these results is a general single-shot characterization of 
the secret key length which shows that, when $X\mc Y\mc Z$, $S_{\ep,\delta}(X,Y|Z)$
roughly equals the $(\ep+\delta)$-tail of the random variable
\[
i(X\wedge Y|Z) = \log \frac{\bP{XY|Z}{X,Y|Z}}{\bP {X|Z}{X|Z}\bP{Y|Z}{Y|Z}}.
\]
Our main technical contribution in proving this result 
is a new single-shot secret key agreement protocol
which uses interactive communication and attains the desired optimal performance. 
It was observed in \cite{Mau93, AhlCsi93} that
a simple one-way communication protocol suffices to attain the secret key capacity,
when the Markov relation $X \mc Y \mc Z$ holds. 
Also, for a multiterminal setup with constant $Z$, 
\cite{CsiNar04} showed
that a noninteractive communication protocol achieves the secret key capacity. 
Prior to this work,  when the Markov relation $X \mc Y \mc Z$ holds,
only such noninteractive communication protocols were used for generating
secret keys\footnote{Interaction is known to help 
in some cases where neither $X \mc Y \mc Z$ nor $Y \mc X \mc Z$ is satisfied
\cite{Mau93,WatMatUye07,GohAna10}.}, even in single-shot setups (\cf \cite{RenWol05}).
In contrast, our proposed protocol uses interactive communication.
We note in Remark~\ref{r:noninteractive_insufficient} that none of the standard one-way
communication protocols achieve the optimal asymptotic bounds,
suggesting that perhaps interaction is necessary
for generating a secret key of optimal length (see Section~\ref{s:discussion}
for further discussion and an illustrative example).

Typically, secret key agreement protocols consist of two steps:
{\it information reconciliation} and {\it privacy amplification}. 
In the first step, the parties communicate to generate some shared 
random bits, termed {\it common randomness}. However, the communication
used leaks some information about the generated common randomness. Therefore,
a second privacy amplification step is employed to extract 
from the common randomness secure random
bits that are almost independent of the communication used. 
For IID observations,
the information reconciliation step of 
the standard one-way secret key agreement protocol 
entails the two parties agreeing on $X$ using a 
one-way communication of rate $H(X|Y)$. In the privacy amplification step,
the rate $H(X|Z)$ residual randomness of $X$, which is almost independent of $Z$, is 
used to extract a secret key of rate $H(X|Z) - H(X|Y)$ which is independent jointly
of $Z$ and the communication used in the information reconciliation stage.
Under the Markov condition $X \mc Y \mc Z$, the resulting secret key attains 
the secret key capacity $I(X \wedge Y \mid Z)$. 
However, in the single-shot regime, the behavior of RVs
$- \log \bP{X|Z}{X|Z}$ and $-\log \bP{X|Y}{X|Y}$, rather than their expected values, becomes relevant. 
The difficulty in extending the standard one-way communication protocol
to the single-shot setup lies in the spread of the {\it information spectrums}\footnote{The range of the 
log-likelihood $-\log \bP X x $ (conditional log-likelihood $-\log \bP {X|Y}{x|y}$)
is referred to as information spectrum of $\bPP X$ (conditional information spectrum of
$\bPP{X|Y}$). This notion was introduced in the seminal 
work \cite{HanVer93} and is appropriate for deriving single-shot coding theorems, without making
assumptions on the underlying distribution. See \cite{Han03} for a detailed account.}
of $\bPP{X|Y}$ and $\bPP{X|Z}$. Specifically, 
while we require the random variable $-\log \bP{X|Y}{X|Y}$ itself to show up as the length
of communication in the information reconciliation step, a naive extension requires
as much communication as a large probability tail of $-\log \bP{X|Y}{X|Y}$.
To remedy this, we {\it slice the spectrum} of $\bPP{X|Y}$ into slices\footnote{
See Appendix~\ref{appendix:high_secrecy_protocol} for a secret key agreement
based on slicing the spectrum of $\bPP X$.}
of length $\Delta$ each and adapt the protocol to the slice which contains
$(X,Y)$. However, since neither party knows the value of
$-\log \bP{X|Y}{X|Y}$, this adaptation requires interactive communication.

Motivating this work, and underlying our converse proof, is a recently established 
single-shot upper bound on secret key lengths for the multiparty secret key agreement problem \cite{TyaWat14} (see, also, \cite{TyaNar13ii}).
The proof relies on relating secret key agreement to binary hypothesis 
testing. In spirit, this result can be regarded as a multiterminal variant 
of a similar single-shot converse for the channel coding problem
which appeared first in \cite{Nagaoka01, HayNag03} and has been termed
the meta-converse by Polyanskiy, Poor, and Verd\'u~\cite{PolPooVer10, Pol10}
(see, also, \cite{WanRen12} and \cite[Section 4.6]{Hayashi06}).

The basic concepts of secret key agreement and a general result 
for converting a high reliability protocol to a high secrecy protocol 
are given in the next section. In Section \ref{s:single_shot_upper_bound},
we review the single-shot upper bound of \cite{TyaWat14} for the two party case. 
Our new secret key agreement protocol and its single-shot performance analysis is
presented in Section \ref{s:protocols}. The single-shot results are applied to general
sources in Section \ref{s:general_sources} and to IID sources in Section \ref{s:second_order}. 
The final section contains a discussion on the role of 
interaction in our secret key agreement protocols.

\section{Secret keys}\label{s:secret_keys}
We consider the problem of secret key agreement using interactive public 
communication by two (trusted) parties observing
RVs $X$ and $Y$ taking values in countable sets $\cX$ and $\cY$, respectively.
Upon making these observations, the parties communicate interactively over 
a public communication channel that is accessible by an eavesdropper. We 
assume that the communication channel is error-free and authenticated. 
Specifically, the communication is sent over $r$ rounds of interaction\footnote{
In the asymptotic regime considered in Sections~\ref{s:general_sources}, 
the number of rounds $r$ may depend on the block length $n$.}. In 
the $j$th round of communication, $1\leq j \leq r$, each party sends a
message which is a function of its observation, {\it locally generated} 
randomness denoted by\footnote{The RVs $U_x$ and $U_y$
are mutually independent and independent jointly of $(X, Y)$.}
$U_x$ and $U_y$, and the 
previously observed communication. The overall interactive communication is 
denoted by $\bF$. In addition to $\bF$, the eavesdropper observes a RV
$Z$ taking values in a countable set $\cZ$. The joint distribution $\bPP{XYZ}$
is known to all parties.

Using the interactive communication $\bF$ and their local observations, 
the parties agree on a secret key. A RV $K$ constitutes a secret key if the two parties form
estimates that agree with $K$ with
probability close to $1$ and $K$ is  concealed, in effect, from an eavesdropper
with access to $(\bF, Z)$. 
Formally, we have the following definition.

\begin{definition}\label{d:secret_key}
A RV $K$ with range $\cK$ constitutes 
an {\it$(\ep, \delta)$-secret key} ($(\ep, \delta)$-SK) if there exist functions $K_x$ and $K_y$ of $(U_x, X, \bF)$ and $(U_y, Y, \bF)$,
respectively, such that
the following two conditions are satisfied 
\begin{align}
\bPr{K_x =  K_y = K} &\geq 1 - \ep, 
\\
\ttlvrn {\bPP{K\bF Z}} {\bPP{\mathtt{unif}}\times \bPP{\bF Z}} &\leq \delta,
\label{eq:secrecy-condition}
\end{align}
where $\bPP{\mathtt{unif}}$ is the uniform 
distribution on $\cK$ and 
$$\ttlvrn{\mathrm{P}}{\mathrm{Q}} = \frac 12\sum_u |\mathrm{P}(u) - \mathrm{Q}(u)|.$$
\end{definition}
The first condition above represents the {\it reliability} 
of the secret key and the second condition guarantees {\it secrecy}. 

\begin{definition}\label{d:secret_key_length}
Given $\ep, \delta \in [0,1)$,  the supremum over the lengths $\log|\cK|$ of an $(\ep, \delta)$-SK is denoted by 
$S_{\ep, \delta}(X, Y \mid Z)$. 
\end{definition}

\begin{remark}\label{r:interesting_ep_delta}
The only interesting case is when $\ep+\delta<1$, since otherwise
$S_{\ep, \delta}(X, Y \mid Z)$ is unbounded. Indeed, consider two trivial
secret keys $K_1$ and $K_2$ with range $\cK$ 
generated as follows: For $K_1$, the first party generates $K_x=K_1$ uniformly over
$\cK$ and sends it to the second party. Thus, $K_1$ constitutes a $(0, 1 - 1/|\cK|)$-SK,
and therefore, also a $(0,1)$-SK. For $K_2$, the first party generates $K_x=K_2$ uniformly
over $\cK$ and the second party generates $K_y$ uniformly over $\cK$. Then, $K_2$ constitutes
a $(1-1/|\cK|, 0)$-SK, and therefore, also a $(1,0)$-SK.
If $\ep+\delta \ge 1$, the RV $K$ which equals $K_1$ with probability $(1-\ep)$ and $K_2$ 
with probability $\ep$
constitutes $(\ep, 1-\ep)$-SK of length $\log|\cK|$, 
and therefore, also an $(\ep, \delta)$-SK of the same length. Since $\cK$ was arbitrary, 
$S_{\ep, \delta}(X, Y \mid Z) = \infty$.
\end{remark}
Remark \ref{r:interesting_ep_delta} exhibits 
a high reliability $(0,1)$-SK and a high secrecy $(1,0)$-SK
for the trivial case $\ep+\delta \ge 1$. The two constructions 
together sufficed to characterize $\sleng$. Following a similar approach for the regime 
$\ep+\delta<1$,
we can construct a high reliability $(\ep+\delta, 0)$-SK and a high secrecy
$(0, \ep+\delta)$-SK and 
randomize over those two secret keys with probabilities $\ep/(\ep+\delta)$ and $\delta/(\ep+\delta)$
to obtain a hybrid, $(\ep, \delta)$-SK. However, the results below show that we do not need 
to construct both  high secrecy and high reliability secret keys for the secrecy definition in 
\eqref{eq:secrecy-condition}
and a high reliability construction alone will suffice. We first show that
 any $(\ep, \delta)$-SK can be converted into
a high secrecy, $(\ep+\delta, 0)$-SK.

\begin{proposition}[{{\bf Conversion to High Secrecy Protocol}}] \label{p:conversion}
Given an $(\ep, \delta)$-SK, there exists an $(\ep+\delta,0)$-SK of the same length.
\end{proposition}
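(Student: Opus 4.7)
The plan is to keep the original protocol (the same public communication $\bF$ and the same estimators $K_x,K_y$) completely intact, and to redefine only the abstract key random variable so that the new RV $K'$ is exactly uniform given $(\bF, Z)$ while remaining close in probability to $K$. The key observation is that condition \eqref{eq:secrecy-condition} is equivalent to the averaged conditional TV bound
\[
\bEE{\ttlvrn{\bPP{K\mid \bF, Z}}{\bPP{\mathtt{unif}}}} \;=\; \ttlvrn{\bPP{K\bF Z}}{\bPP{\mathtt{unif}}\times \bPP{\bF Z}} \;\le\; \delta,
\]
so on average over $(\bF,Z)$, the conditional law of $K$ lies within $\delta$ of uniform on $\cK$.

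Given this, I would invoke the maximal coupling lemma pointwise in $(f,z)$: for each realization there exists a joint distribution on $\cK\times\cK$ with marginals $\bPP{K\mid \bF=f, Z=z}$ and $\bPP{\mathtt{unif}}$ whose diagonal mass equals $1-\ttlvrn{\bPP{K\mid \bF=f, Z=z}}{\bPP{\mathtt{unif}}}$. Using auxiliary (conditionally independent) randomness, this defines $K'$ as a randomized function of $(K,\bF,Z)$ on an enlarged probability space that preserves the joint law of all the original protocol variables $(X,Y,Z,U_x,U_y,\bF,K_x,K_y)$. By construction, $\bPP{K'\bF Z}=\bPP{\mathtt{unif}}\times \bPP{\bF Z}$ exactly, and taking expectation over $(\bF,Z)$ yields $\bPr{K'\neq K}\le \delta$.

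It then remains to verify the two properties of an $(\ep+\delta, 0)$-SK for $K'$, still using the original $K_x,K_y$ as the parties' estimates. Secrecy holds with parameter zero by construction, while reliability follows from a union bound:
\[
\bPr{K_x = K_y = K'} \;\ge\; \bPr{K_x = K_y = K} - \bPr{K'\neq K} \;\ge\; 1 - \ep - \delta.
\]
The range $\cK$ is preserved, so the key length is unchanged. The only conceptual point worth flagging is that the construction is legitimate under Definition~\ref{d:secret_key}, which only requires $K_x,K_y$ to be functions of the parties' local views, while $K$ (and now $K'$) is merely an abstract RV on the underlying probability space; no new public communication is introduced, and the auxiliary randomness used to realize the coupling sits purely in the mathematical description of the experiment. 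I do not foresee a substantive obstacle — the main effort is simply careful bookkeeping of the enlarged, coupled probability space.
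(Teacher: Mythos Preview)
Your proposal is correct and follows essentially the same approach as the paper: apply the maximal coupling lemma conditionally on each realization $(\bF,Z)=(f,z)$ to couple $\bPP{K\mid \bF=f,Z=z}$ with $\bPP{\mathtt{unif}}$, obtaining a $K'$ that is exactly uniform given $(\bF,Z)$ and satisfies $\bPr{K'\neq K}\le\delta$, and then use the union bound for reliability. The paper likewise explicitly extends the joint law via $\bPP{K'|K\bF Z}$ and, in a subsequent remark, makes the same point you do about $K$ being a purely formal RV not required to be computable by either party.
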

{\it Proof.} Let $K$ be an $(\ep, \delta)$-SK using interactive communication $\bF$, with local estimates $K_x$ and $K_y$.
\hcolor{
We construct a new $(\ep+\delta, 0)$-SK $K^\prime$ using the {\it maximal coupling lemma}, which asserts the following (\cf \cite{Strassen65}):
Given two distributions $\dP$ and $\dQ$ on a set $\cX$, there exists a joint distribution $\bPP{XX^\prime}$
on $\cX\times \cX$ such that the marginals are $\bPP{X} = \dP$ and $\bPP{X^\prime} = \dQ$, and under $\bPP{XX^\prime}$
\begin{align}
\bPr{X\neq X^\prime} = \ttlvrn \dP \dQ.
\label{e:maximal_coupling_property}
\end{align}
The distribution $\bPP{XX^\prime}$ is called the {\it maximal coupling} of $\dP$ and $\dQ$.
}

For each fixed realization of $(\bF, Z)$, let 
$\bPP{KK^\prime\mid \bF, Z}$ be the {maximal coupling}
of $\bPP{K\mid \bF Z}$ and $\unif$. Then
$\bPP{K^\prime\bF Z} = \unif \times \bPP{\bF Z}$, and since $K$ is an $(\ep, \delta)$-SK, we get by the maximal coupling property
\eqref{e:maximal_coupling_property} that
\[
\bPr{K\neq K^\prime} =  \ttlvrn{\bPP{K\bF Z}}{\unif\times \bPP{\bF Z}}\le \delta,
\]
\hcolor{
and define 
\begin{align}
\bPP{K^\prime KK_xK_y\bF XYZU_xU_y} := \bPP{K^\prime|K\bF Z} \bPP{KK_xK_y\bF XYZU_xU_y}.
\label{e:joint_distribution_coupling}
\end{align}
Since $\bPr{K = K_x = K_y} \geq 1-\ep$, under $\bPP{K^\prime KK_xK_y\bF XYZ}$ we have
\[
\bPr{K_x = K_y = K^\prime}\geq 1- \ep -\delta.
\]
}
Thus, $K^\prime$ constitutes an $(\ep+\delta,0)$-SK. \qed

\hcolor{
Proposition~\ref{p:conversion} plays an important role in 
our secret  key agreement protocol and allows us to convert 
a high reliability $(\eta, \alpha)$-SK
with small $\eta$ into an $(\ep, \delta)$-SK for any
arbitrary $\ep$ and $\delta$
satisfying (roughly) $\ep+\delta> \alpha$. Formally, we have the following.
}
\begin{proposition}[{{\bf Hybrid Protocol}}] \label{p:time_sharing}
Given a protocol for generating $(\eta,\alpha)$-SK, there exists
a protocol for generating an $(\ep, \delta)$-SK of the same
length for every $0< \ep, \delta<1$ such that
\begin{align*}
\ep &\ge \eta\\
\ep + \delta &\ge \alpha + \eta.
\end{align*}
\end{proposition}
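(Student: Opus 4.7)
The plan is a time-sharing argument that uses Proposition~\ref{p:conversion} as a subroutine. Starting from the given $(\eta,\alpha)$-SK, call it Protocol~A, Proposition~\ref{p:conversion} produces an $(\eta+\alpha,0)$-SK of the same length, call it Protocol~B, so that both keys take values in a common range $\cK$.

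To form the hybrid, I would have one party sample a fresh Bernoulli$(p)$ bit $T$ (for a $p\in[0,1]$ to be chosen) from its local randomness and broadcast it as the first round of the public communication. Conditioned on $T=1$ the parties execute Protocol~A; conditioned on $T=0$ they execute Protocol~B. The output $K$ and the local estimates $K_x,K_y$ are those of whichever protocol was selected. Reliability then decomposes along $T$:
\[
\bPr{K_x=K_y=K} \,\geq\, p(1-\eta)+(1-p)(1-\eta-\alpha) \,=\, 1-\eta-(1-p)\alpha.
\]
For secrecy, because $T$ is part of the public communication $\bF$, writing $\bF=(T,\bF')$ gives
\[
\ttlvrn{\bPP{K\bF Z}}{\unif\times \bPP{\bF Z}} \,=\, \sum_t \bPP T(t)\,\ttlvrn{\bPP{K\bF' Z\mid T=t}}{\unif\times \bPP{\bF' Z\mid T=t}} \,\leq\, p\alpha,
\]
since under $T=1$ the conditional distribution of $(K,\bF',Z)$ is exactly that generated by Protocol~A and under $T=0$ it is exactly that generated by Protocol~B. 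It then suffices to pick $p\in[0,1]$ with $\eta+(1-p)\alpha\leq \ep$ and $p\alpha\leq\delta$. If $\alpha=0$ any $p$ works, using the hypothesis $\ep\geq\eta$; if $\alpha>0$ the two constraints rearrange to $\max\{0,\,1-(\ep-\eta)/\alpha\}\leq p\leq \min\{1,\,\delta/\alpha\}$, and the hypotheses $\ep\geq\eta$ and $\ep+\delta\geq\alpha+\eta$ are precisely what make this interval nonempty.

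The only delicate point is the bookkeeping in the secrecy calculation: the time-sharing bit $T$ must be folded into $\bF$ so that the total variation distance splits as a convex combination over $T$, instead of picking up an extra cross term from mixing the two protocols. Once this is arranged, no machinery beyond Proposition~\ref{p:conversion} and a straightforward convex combination is required.
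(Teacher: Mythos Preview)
Your proposal is correct and follows essentially the same time-sharing strategy as the paper: convert the $(\eta,\alpha)$-SK to an $(\eta+\alpha,0)$-SK via Proposition~\ref{p:conversion} and then randomize between the two. The paper sharpens this slightly by observing (from the proof of Proposition~\ref{p:conversion}) that the underlying $(K_x,K_y,\bF)$ is identical for both secret keys---only the hidden RV $K$ changes---so no time-sharing bit $T$ needs to be broadcast, and it fixes the mixture weight to $\theta=\delta/(\ep-\eta+\delta)$ rather than arguing feasibility of a range of $p$.
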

{\it Proof.} Given an $(\eta, \alpha)$-SK $K_1$, by 
Proposition~\ref{p:conversion} there exists an 
$(\alpha+\eta, 0)$-SK $K_2$. 
Let $\theta = \delta/(\ep-\eta + \delta)$. Consider
a secret key $K$ obtained by a hybrid use of the
protocols for generating $K_1$ and $K_2$, with the protocol
for $K_1$ executed with probability $\theta$ and 
that for $K_2$ with probability $1-\theta$.
Note from the proof of Proposition~\ref{p:conversion}
that it is 
the same secret key agreement protocol $(K_x, K_y, \bF)$ 
that generates both $K_1$ and $K_2$.
Thus, the claim follows for the time-shared secret key $K$
since
\begin{align}
\bPr{K = K_x = K_y}
&=\theta\bPr{K_1=K_x=K_y}+(1-\theta)\bPr{K_2=K_x=K_y}
\nonumber
\\
&\geq 1 - \frac{\delta \eta + (\ep - \eta)(\alpha+\eta)}{\ep -\eta + \delta}
\nonumber
\\
&\geq 1 - \frac{\delta \eta + (\ep - \eta)(\ep+\delta)}{\ep -\eta + \delta}
\label{e:ep+delta_use1}
\\
&= 1- \ep,
\nonumber
\end{align}
and
\begin{align}
\ttlvrn{\bPP{K\bF Z}}{\bPP{\mathtt{unif}}\bPP{\bF Z}}
&\leq \theta\ttlvrn{\bPP{K_1\bF Z}}{\bPP{\mathtt{unif}}\bPP{\bF Z}}
+(1-\theta)\ttlvrn{\bPP{K_2\bF Z}}{\bPP{\mathtt{unif}}\bPP{\bF Z}}
\nonumber
\\
&\leq \frac{\delta \alpha + (\ep-\eta)\cdot 0}{\ep -\eta + \delta}
\nonumber
\\
&\leq \frac{\delta (\ep+\delta - \eta)}{\ep -\eta + \delta}
\label{e:ep+delta_use2}
\\
&= \delta,
\nonumber
\end{align}
where we have used the assumption $\ep+\delta\geq \alpha+\eta$ in 
\eqref{e:ep+delta_use1} and \eqref{e:ep+delta_use2}.
\qed

\begin{remark}\label{r:actual_key}
Note that the actual secret key $K$ in Definition~\ref{d:secret_key}
is not available to any party
and has only a formal role in the secret key agreement protocol. Interestingly,
the proof above  says that the estimates $(K_x,K_y)$ of a high reliability $(\eta, \alpha)$-SK
with $\eta \approx 0$ constitute an $(\ep,\delta)$-SK as well for every $\ep+\delta\gtrsim \alpha$,
 albeit for a different hidden RVs $K$.
\end{remark}

Thus, it suffices to exhibit a high reliability 
 $(\eta,\alpha)$-SK with
small $\eta$ and desired $\alpha$. Such a protocol
is given in Section~\ref{s:protocols} and underlies
all our achievability results.

{\bf A more demanding secrecy requirement.} A more demanding secrecy requirement enforces one of the estimates $K_x$ or 
$K_y$ itself to be secure, \ie, 
\begin{align}
\ttlvrn{ \bPP{K_x\bF Z}} { \bPP{\mathtt{unif}} \bPP{\bF Z}} \leq \delta
\text { or } 
\ttlvrn{\bPP{K_y\bF Z}} { \bPP{\mathtt{unif}} \bPP{\bF Z}} \leq \delta.
\label{eq:secrecy-condition-demanding}
\end{align}
The validity of Proposition~\ref{p:conversion} for this secrecy requirement
remains open. However, in the important special case when $Z$ 
is a function
of either $X$ or $Y$, which includes the case of constant $Z$, Proposition~\ref{p:conversion}
holds even under the more demanding secrecy requirement \eqref{eq:secrecy-condition-demanding}.
Indeed, let $(K_x, K_y)$ be an $(\ep, \delta)$-SK with $K_x$
satisfying the more demanding secrecy requirement above.
Proceeding as in the proof of Proposition~\ref{p:conversion}
with $K_x$ in the role of $K$, we obtain a RV 
$K^\prime$ such that $\bPr{K^\prime \neq K_y}\leq \ep+\delta$
and $\bPP{K^\prime \bF Z} = \bPP{\mathtt{unif}}\times \bPP{\bF Z}$.
Let the joint distribution $\bPP{K^\prime K_xK_y\bF XYZU_xU_y}$
be as in \eqref{e:joint_distribution_coupling} with $K_x$ replacing
$K$. To claim that $K^\prime$ constitutes an $(\ep+\delta,0)$-SK
under \eqref{eq:secrecy-condition-demanding}, it suffices to show
that one of the parties can simulate $K^\prime$.
To that end, the party observing $X$ can first run the original secret key agreement protocol
to get $K_x$ and $\bF$. Also, $Z$ is available to the party since it is
a function of $X$. Thus, this party can simulate the required RV $K^\prime$
using the distribution $\bPP{K^\prime|K_x\bF Z}$, which completes the proof
of Proposition~\ref{p:conversion} under the more demanding secrecy requirement.

Note that the argument above relies on using local randomness to simulate $K^\prime$.
For the original secrecy requirement \eqref{eq:secrecy-condition}, 
Proposition~\ref{p:conversion} holds even when we restrict to deterministic 
protocols with no local randomness allowed. It turns out that this is not
the case for the more demanding secrecy requirement \eqref{eq:secrecy-condition-demanding}, 
as the following simple
counterexample shows: Let $X$ be a binary RV taking $1$ with probability $p < \frac{1}{2}$,
and let $Y = Z =$ constant. Then, $K_x = X$ and constitutes a $1$-bit $(p, 1/2-p)$-SK
under \eqref{eq:secrecy-condition-demanding}. If Proposition~\ref{p:conversion}
holds, the parties should be able to generate a $(1/2, 0)$-SK. However,
a $(1/2,0)$-SK consists of an unbiased bit, which cannot be generated 
without additional randomness. Therefore, for secrecy requirement \eqref{eq:secrecy-condition-demanding}, 
Proposition~\ref{p:conversion} does not hold if we restrict to deterministic protocols.

To conclude, for the special case when $Z$ is a function of $X$,
it suffices to construct only a high reliability protocol,
provided that local randomness is available. In fact, the high reliability
protocol proposed in Section~\ref{s:protocols} satisfies the more 
demanding secrecy requirement \eqref{eq:secrecy-condition-demanding}
and, if $Z$ is a function of either $X$ or $Y$, 
all the results of this paper hold even under \eqref{eq:secrecy-condition-demanding}.


\section{Upper bound on $S_{\ep, \delta}(X, Y \mid Z)$}\label{s:single_shot_upper_bound}
We recall the {\it conditional independence testing} upper bound on $S_{\ep, \delta}(X, Y \mid Z)$,
which was established recently in \cite{TyaWat14, TyaWat14ii}. In fact, the general upper bound 
in \cite{TyaWat14, TyaWat14ii} is a single-shot upper bound
on the secret key length for a multiparty secret key agreement problem. We recall
a specialization of the general result to the case at hand. In order
to state our result, we need the following concept from binary hypothesis
testing.

Consider a binary hypothesis testing problem with null hypothesis $\mathrm{P}$ and alternative hypothesis $\mathrm{Q}$, where $\mathrm{P}$ and $\mathrm{Q}$ are distributions on the same alphabet ${\cal V}$. Upon observing a value $v\in \cV$, the observer needs to decide if the value was generated by the distribution $\bPP{}$ or the distribution $\mathrm{Q}$. To this end, the observer applies a stochastic test $\mathrm{T}$, which is a conditional distribution on $\{0,1\}$ given an observation $v\in \cV$. When $v\in \cV$ is observed, the test $\mathrm{T}$ chooses the null hypothesis with probability $\mathrm{T}(0|v)$ and the alternative hypothesis with probability $T(1|v) = 1 - T(0|v)$. For $0\leq \ep<1$, denote by $\beta_\ep(\mathrm{P},\mathrm{Q})$ the infimum of the probability of error of type II given that the probability of error of type I is less than $\ep$, \ie,
\begin{eqnarray}
\beta_\ep(\mathrm{P},\mathrm{Q}) := \inf_{\mathrm{T}\, :\, \mathrm{P}[\mathrm{T}] \ge 1 - \ep} \mathrm{Q}[\mathrm{T}],
\label{e:beta-epsilon}
\end{eqnarray}
where 
\begin{eqnarray*}
\mathrm{P}[\mathrm{T}] &=& \sum_v \mathrm{P}(v) \mathrm{T}(0|v), \\
\mathrm{Q}[\mathrm{T}] &=& \sum_v \mathrm{Q}(v) \mathrm{T}(0|v).
\end{eqnarray*}

The definition of a secret key used in  \cite{TyaWat14, TyaWat14ii} is different from 
Definition~\ref{d:secret_key}. However, the two definitions are closely related, and the
upper bound of \cite{TyaWat14, TyaWat14ii} can be extended to our case as well.
We review the alternative definition in Appendix~\ref{appendix:theorem:one-shot-converse-source-model} 
and relate it to Definition~\ref{d:secret_key} to derive the following upper bound, which
will be instrumental in our converse proofs.
\begin{theorem}[{{\bf Conditional independence testing bound}}] \label{theorem:one-shot-converse-source-model}
Given $0\leq \ep+ \delta <1$, $0<\eta<1-\ep-\delta$, the following bound holds:
\begin{eqnarray}
S_{\ep, \delta}\left(X, Y \mid  Z\right)
\le -\log \beta_{\ep+\delta+\eta}\big(\bPP{XYZ},\mathrm{Q}_{X|Z}\mathrm{Q}_{Y|Z}\mathrm{Q}_{Z}\big)  + 2 \log(1/\eta),
\nonumber
\end{eqnarray}
for all joint distributions $\mathrm{Q}$ on $\cX\times \cY\times \cZ$ that render $X$ and $Y$ conditionally independent given $Z$. 
\end{theorem}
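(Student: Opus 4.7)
My plan is to derive Theorem~\ref{theorem:one-shot-converse-source-model} as a consequence of the single-shot converse from \cite{TyaWat14, TyaWat14ii}, which was originally proved under a slightly different, combined secrecy-reliability notion of secret key. The first step is to introduce this alternative notion: call $K$ a \emph{combined $\gamma$-SK} if there exist estimates $K_x, K_y$ and interactive communication $\bF$ such that
\[
\ttlvrn{\bPP{K_xK_y\bF Z}}{\bPP{\mathtt{unif}}^{=}\times \bPP{\bF Z}} \le \gamma,
\]
where $\bPP{\mathtt{unif}}^{=}(k_x, k_y) := \indicator(k_x=k_y)/|\cK|$ denotes the distribution uniformly concentrated on the diagonal of $\cK\times\cK$. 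The converse of \cite{TyaWat14, TyaWat14ii} bounds the length of any combined $\gamma$-SK by $-\log\beta_{\gamma+\eta}(\bPP{XYZ},\mathrm{Q}_{X|Z}\mathrm{Q}_{Y|Z}\mathrm{Q}_Z)+2\log(1/\eta)$ for every $\eta\in(0,1-\gamma)$ and every $\mathrm{Q}$ rendering $X\perp Y\mid Z$.

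The next step is to show that any $(\ep,\delta)$-SK of length $L$ in the sense of Definition~\ref{d:secret_key} induces a combined $(\ep+\delta)$-SK of the same length. Proposition~\ref{p:conversion} produces an $(\ep+\delta,0)$-SK $K^\prime$ of length $L$ associated with the \emph{same} protocol $(\bF, K_x, K_y)$, with $\bPP{K^\prime\bF Z} = \bPP{\mathtt{unif}}\times \bPP{\bF Z}$ exactly and $\bPr{K^\prime=K_x=K_y}\ge 1-\ep-\delta$. Consequently $\bPP{K^\prime K^\prime \bF Z} = \bPP{\mathtt{unif}}^{=}\times \bPP{\bF Z}$, and the coupling $(K_x,K_y)\leftrightarrow(K^\prime,K^\prime)$ inherited from the conversion gives
\[
\ttlvrn{\bPP{K_xK_y\bF Z}}{\bPP{\mathtt{unif}}^{=}\times \bPP{\bF Z}} \le \bPr{(K_x,K_y)\ne(K^\prime,K^\prime)}\le \ep+\delta,
\]
so $(K_x,K_y,\bF)$ is a combined $(\ep+\delta)$-SK of length $L$. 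Invoking the converse above with $\gamma=\ep+\delta$ then yields the claimed inequality.

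The nontrivial ingredient is the \cite{TyaWat14, TyaWat14ii} converse itself, which rests on two observations: data processing of $\beta_\alpha$ applied to the protocol-induced stochastic map $(X,Y,Z)\mapsto(K_x,K_y,\bF,Z)$, a Markov kernel that depends only on the protocol and not on the source distribution; and the preservation of $X\perp Y\mid Z$ through interactive communication, which forces $K_x\perp K_y\mid(\bF,Z)$ under $\mathrm{Q}$. The two additive $\log(1/\eta)$ corrections come from a careful thresholding in the choice of hypothesis test, and I anticipate the main technical obstacle to lie in simultaneously ensuring $\bPP{}$-acceptance at least $1-\gamma-\eta$ (using the near-diagonal structure of the combined SK distribution) and $\mathrm{Q}$-acceptance at most $2^{-L}\eta^{-2}$ (using a tail truncation of the relevant information density combined with the conditional independence $K_x\perp K_y\mid(\bF,Z)$). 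Reconciling these two slacks cleanly is the delicate part; the detailed bookkeeping is deferred to the appendix.
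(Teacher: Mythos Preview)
Your proposal is correct and follows the same overall architecture as the paper: reduce an $(\ep,\delta)$-SK to a combined $(\ep+\delta)$-SK in the sense of \eqref{e:security}, then apply the \cite{TyaWat14, TyaWat14ii} single-shot converse (Lemma~\ref{lemma:relation-SK-HT} together with data processing and Lemma~\ref{lemma:property-interactive-communication}). Your sketch of the latter---threshold test on $\log(\mathrm{P}_{\mathtt{unif}}^{=}/\bQQ{K_xK_y|\bF Z})$, conditional independence of $K_x,K_y$ under $\bQQ{}$ given $(\bF,Z)$, and Cauchy--Schwarz to handle the diagonal sum---matches the paper's proof of Lemma~\ref{lemma:relation-SK-HT} closely.

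The one place you diverge is the reduction step. The paper proves Lemma~\ref{l:secret key_combined} by a direct triangle-inequality computation: lift to $(K,K_x,K_y)$, insert the intermediate measure $\mathrm{P}_{K|\bF Z}^{(3)}\times\bPP{\bF Z}$, and bound the two pieces by $\ep$ and $\delta$ separately. You instead invoke Proposition~\ref{p:conversion} to produce $K'$ with exact marginal $\bPP{\mathtt{unif}}\times\bPP{\bF Z}$ and then apply the coupling inequality to $(K_x,K_y)$ versus $(K',K')$. Both are valid; your route is arguably cleaner since it recycles a result already established in the paper, whereas the paper's route is self-contained and does not require the maximal-coupling machinery. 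Neither gains or loses anything quantitatively.
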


\section{The secret key agreement protocol}\label{s:protocols}

In this section we present our secret key agreement protocol, which will
be used in all the achievability results of this paper.
A typical secret key agreement protocol for two parties entails sharing the observations
of one of the parties, referred to as {\it information reconciliation},
and then extracting a secret key out of the shared observations,
referred to as {\it privacy amplification} 
(\cf \cite{Mau93}, \cite{AhlCsi93}, \cite{Ren05}). In another interpretation,
the parties communicate first to establish a {\it common randomness}
\cite{AhlCsi98} and then extract a secret key from the common randomness\footnote{For an interpretation
of secrecy agreement in terms of common randomness decomposition, see \cite{CsiNar04, CsiNar08, TyaThesis}.}.
Our protocol below, too,
has these two components but the rate of the communication for information
reconciliation and the rate of the randomness extracted by privacy amplification
have to be chosen carefully. 

Heuristically, in the information reconciliation stage, the first party randomly bins $X$ and
sends it to the second party. 
If we do not use interaction, by the Slepian-Wolf theorem \cite{SleWol73} (see \cite{MiyKan95},\cite[Lemma 7.2.1]{Han03}, \cite{Kuz12}
for a single-shot version) the length of communication needed is roughly equals
a large probability upper bound for $h(X|Y) := -\log \bP{X|Y}{X|Y}$. 
However, in order to derive a lower bound that matches our upper bound, we expect 
to send $X$ to $Y$ using approximately $h(X|Y)$ bits of communication, which
can differ from the tail bound above by as much as the length of the spectrum of 
$\bPP{X|Y}$.
To overcome this gap, we utilize {spectrum slicing}, a technique introduced 
in \cite{Han03}, to construct an adaptive scheme that can handle the 
{\it spread of information spectrum}. 
Specifically, we divide the spectrum of $\bPP{X|Y}$ into slices of small lengths.
The protocol proceeds interactively to adapt to the current slice index, allowing
us to replace the spectrum length in the argument above with the length of a single slice.

\subsection{Formal description of the protocol}
We consider the essential spectrum of $\bPP{X|Y}$, \ie, the set of values taken by
$h(X|Y)$ between $\la_{\min}$ and $\la_{\max}$,
where $\la_{\min}$ and $\la_{\max}$ are chosen such that
$h(X|Y)$ lies in $(\la_{\min}, \la_{\max})$  with large probability.
We divide the essential spectrum of $\bPP{X|Y}$ 
into $L$ slices, $L$ of them of width $\Delta$. Specifically, for $1\leq j \leq L$, 
the $j$th slice of the spectrum of $\bPP{X|Y}$ is defined as follows
\begin{align*}
\cT_j = \{(x,y) : \lambda_j \leq - \log \bP{X|Y}{x|y} < \lambda_j + \Delta\},
\end{align*}
where $\lambda_j = \lambda_{\min} + (j-1)\Delta$. Note that
 the slice index is not available to any one party. The proposed
protocol proceeds assuming the lowest index $j=1$ and uses
interactive communication to adapt to the actual slice index.

For information reconciliation, we simply send a random binning of $X$.
However, the bin size is increased successively, where the incremental bin sizes $M_1,\ldots,M_L$ are given by 
\[
\log M_j = \begin{cases}
\lambda_1 +\Delta + \gamma, &\quad j = 1 \\
\Delta, &\quad 1 < j \leq L,
\end{cases}
\]

For privacy amplification, we will rely on the {\it leftover hash lemma} \cite{ImpLevLub89, RenWol05}. Let $\cF$ be a {\it $2$-universal family} of mappings $f: \cX\rightarrow \cK$, \ie, for each $x'\neq x$, the family $\cF$ satisfies 
\begin{align} \label{eq:property-two-universal}
\frac{1}{|\cF|} \sum_{f\in \cF} \indicator(f(x) = f(x')) \leq \frac{1}{|\cK|}.
\end{align}
The following lemma is a slight modification of the known forms of the leftover hash lemma (\cf~\cite{Ren05}); 
we give a proof in Appendix \ref{appendix:l:leftover_hash} for completeness.
\begin{lemma}[{{\bf Leftover Hash}}]\label{l:leftover_hash} Consider RVs $X$, $Z$, and $V$ taking values in
$\cX$, $\cZ$ and $\cV$, respectively, where $\cX$ and $\cZ$ are countable and $\cV$ is finite.
Let $S$ be a random seed such that $f_S$ is uniformly distributed over a $2$-universal family as above. 
Then, for $K = f_S(X)$ and for any $\bQQ{Z}$ satisfying $\mathrm{supp}(\bPP{Z}) \subset \mathrm{supp}(\bQQ{Z})$, we have
\begin{align*}
\ttlvrn{\bPP{KVZS}}{\bPP{\mathtt{unif}}\bPP{VZ}\bPP{S}} \leq \frac{1}{2} \sqrt{|\cK||\cV|2^{- H_{\min} \left(\bPP{XZ}\mid \bQQ{Z}\right)}},
\end{align*}
where $\bPP{\mathtt{unif}}$ is the uniform distribution on $\cK$ and 
$$H_{\min} \left(\bPP{XZ}\mid \bQQ{Z}\right) = - \log \sup_{x,z: \bQ{Z}{z}>0} \frac{\bP{XZ}{x,z}}{\bQ{Z}{z}}.$$
\end{lemma}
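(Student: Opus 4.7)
My plan is to follow the standard two-step template for leftover-hashing: (i) convert the $L_1$-norm (total variation) to a weighted $L_2$-norm via Cauchy--Schwarz, and (ii) evaluate the $L_2$-norm using the $2$-universal property, which reduces it to a collision probability that $H_{\min}$ controls. The presence of the auxiliary RV $V$ only affects step (i); it is absorbed into the Cauchy--Schwarz pre-factor and never appears in the collision bound.

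Concretely, I would first write
\[
2\,\ttlvrn{\bPP{KVZS}}{\bPP{\mathtt{unif}}\bPP{VZ}\bPP{S}}
= \sum_{k,v,z,s}\bigl|\bP{KVZS}{k,v,z,s}-\bP{\mathtt{unif}}{k}\bP{VZ}{v,z}\bP{S}{s}\bigr|,
\]
and apply Cauchy--Schwarz with the weight $\bQ{Z}{z}\bP{S}{s}$ (which is $V$- and $K$-free, so $\sum_{k,v,z,s}\bQ{Z}{z}\bP{S}{s}=|\cK||\cV|$) to obtain
\[
\bigl(2\,\ttlvrn{\cdot}{\cdot}\bigr)^{2}
\le |\cK||\cV|\sum_{k,v,z,s}\frac{\bigl(\bP{KVZS}{k,v,z,s}-\bP{\mathtt{unif}}{k}\bP{VZ}{v,z}\bP{S}{s}\bigr)^{2}}{\bQ{Z}{z}\bP{S}{s}}.
\]
Using that $S$ is independent of $(X,V,Z)$ and writing $K=f_S(X)$, the numerator factors as $\bP{S}{s}^{2}\bP{VZ}{v,z}^{2}\bigl(\bP{K|VZS}{k|v,z,s}-1/|\cK|\bigr)^{2}$, which reduces the bound to $\sum_{v,z}\bP{VZ}{v,z}^{2}\bQ{Z}{z}^{-1}\,\bE{S}{\sum_{k}(\bP{K|VZS}{k|v,z,S}-1/|\cK|)^{2}}$.

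Next, for each fixed $(v,z,s)$, expand $\sum_{k}(\bP{K|VZS}{k|v,z,s}-1/|\cK|)^{2}=\sum_{k}\bP{K|VZS}{k|v,z,s}^{2}-1/|\cK|$, and expand the collision probability as a double sum over $x,x'$ of $\bP{X|VZ}{x|v,z}\bP{X|VZ}{x'|v,z}\indicator(f_s(x)=f_s(x'))$. Taking expectation over $S$ and invoking the $2$-universal property~\eqref{eq:property-two-universal} (splitting into $x=x'$ and $x\neq x'$) gives $\bE{S}{\sum_{k}\bP{K|VZS}{k|v,z,S}^{2}}\le\sum_{x}\bP{X|VZ}{x|v,z}^{2}+1/|\cK|$, so the $1/|\cK|$ term cancels neatly against the $-1/|\cK|$ subtracted above.

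Plugging back and summing over $v,z$ yields $\sum_{x,v,z}\bP{XVZ}{x,v,z}^{2}/\bQ{Z}{z}$, which I would bound by pulling out one factor:
\[
\sum_{x,v,z}\frac{\bP{XVZ}{x,v,z}^{2}}{\bQ{Z}{z}}
\le \Bigl(\sup_{x,v,z}\frac{\bP{XVZ}{x,v,z}}{\bQ{Z}{z}}\Bigr)\sum_{x,v,z}\bP{XVZ}{x,v,z}
\le \sup_{x,z}\frac{\bP{XZ}{x,z}}{\bQ{Z}{z}}=2^{-H_{\min}(\bPP{XZ}\mid\bQQ{Z})},
\]
using $\bP{XVZ}{x,v,z}\le\bP{XZ}{x,z}$. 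Combining with the Cauchy--Schwarz step gives the claimed bound after taking square roots and dividing by $2$. The only mild subtlety I anticipate is choosing the Cauchy--Schwarz weight so that $|\cV|$ appears exactly once (rather than being folded into $H_{\min}$); making the weight independent of $v$ is exactly what handles this.
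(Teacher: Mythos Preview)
Your proposal is correct and follows essentially the same Cauchy--Schwarz/collision-probability template as the paper's proof; the only cosmetic difference is that you include the seed $s$ in the Cauchy--Schwarz weight $\bQ{Z}{z}\bP{S}{s}$ from the outset, whereas the paper first fixes $s$, applies Cauchy--Schwarz with weight $\bQ{Z}{z}$, and then averages over $s$ using the concavity of~$\sqrt{\cdot}$. Both routes lead to the same expression $\sum_{x,v,z}\bP{XVZ}{x,v,z}^{2}/\bQ{Z}{z}$ and bound it in the same way via $\bP{XVZ}{x,v,z}\le\bP{XZ}{x,z}$.
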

The main benefit of the spectrum slicing approach above is
that roughly $h(X|Y)+L+\Delta$ bits are sent
for each realization $(X,Y)$.
At the same time, we can estimate $h(X|Y)$ up to a precision of 
$\Delta$ when the protocol stops -- a key property in our secrecy analysis.

The complete protocol is described in Protocol~\ref{prot:high_reliability}.
We remark that the random seed based secret key generation used in 
Protocol~\ref{prot:high_reliability} is only for the ease of security analysis. A slight modification
of our protocol can work with deterministic extractors.

\begin{protocol}[H]
\caption{Secret key agreement protocol}
\label{prot:high_reliability}
\KwIn{Observations $X$ and $Y$} 
\KwOut{Secret key estimates $K_x$ and $K_y$}
\underline{Information reconciliation}
\\
Initiate the protocol with $l=1$
\\
\While{$l \leq L$ and ACK not received} 
{ 
First party sends the random bin index of $X$ into $M_l$ bins, 
$B_l = F_{1l}(X)$, to the second party
\\
\eIf{Second party find a {\it unique} $x$ such that $(x, Y) \in \cT_l$
and
$F_{1j}(x) = B_j, \quad \forall\,1\leq j \leq l$
}
{
Second party sets $\hat X =x$ and sends back
an ACK $F_{2i} = 1$ to the first party 
}
{
Second party sends back a NACK $F_{2i} = 0$\\
Parties update $l \rightarrow l+1$
}
}
\eIf{No ACK received}
{
Protocol declares an error and aborts
}
{
\underline{Privacy amplification}
\\
First party generates the random seed $S$ and
sends it to the second party using public communication
\\
First party generates the secret key $K_x = K = f_S(X)$ 
\\
The second party generates the estimate $K_y$ of $K$
as $K_y = f_S(\hat X)$
}
\end{protocol}

\begin{remark} \label{remark:leakage}
Note that since each ACK-NACK signal will require $1$-bit of communication to implement, 
the number of bits physically sent in the protocol is roughly $h(X|Y)  + \Delta + L$. However, the 
$\log$ of the number of values taken by the transcript is much less, roughly $h(X|Y)  + \Delta + \log L$, 
since the ACK-NACK sequence will be a stopped sequence consisting of NACKs followed by a single ACK. 
By Lemma~\ref{l:leftover_hash}, it is this latter quantity $h(X|Y)  + \Delta + \log L$ that captures the amount
of information leaked to the eavesdropper by public communication, which will be used
in our security analysis of the protocol.
\end{remark}

\subsection{Performance analysis of the protocol}
We now derive performance guarantees for the secret key agreement 
protocol of the previous section. In view of Proposition~\ref{p:time_sharing}
and Remark~\ref{r:actual_key}, the protocol above will constitute 
an $(\ep, \delta)$-SK protocol for arbitrary $\ep, \delta \in (0,1)$ if
it yields an $(\eta, \ep+\delta)$-SK with $\eta\approx 0$.
The result below shows that Protocol~\ref{prot:high_reliability}
indeed yields such a high reliability secret key.

Denote by $\idxy$ the information density
\begin{align} 
\idxy := \log \frac{\bP{XY}{x,y}}{\bP{X}{x}\bP{Y}{y}}.
\nonumber
\end{align}
\begin{theorem} \label{t:protocolA2-Z}
For $\lambda_{\min}, \lambda_{\max}, \Delta>0$ with $\lambda_{\max} \geq \lambda_{\min}$, let 
\begin{align*}
L = \frac{\lambda_{\max} - \lambda_{\min}}{\Delta}.
\end{align*}
Then, for every $\gamma> 0$ and $\lambda\ge 0$, there exists 
an $(\ep, \delta)$-SK $K$ taking values in $\cK$ with
\begin{align}
\ep &\leq \bP{XY}{\cT_0} + L 2^{-\gamma}, 
\nonumber
\\
\delta &\leq \bPr{ \idXY  -  \idXZ  \leq \lambda + \Delta} \nonumber \\
&+ \frac{1}{2} \sqrt{|\cK|2^{-(\lambda -\gamma - 3\log L)}}  + \frac{1}{L} + \bP{XY}{\cT_0} + L 2^{-\gamma},
\nonumber
\end{align}
where
\[
 \cT_0 := \left\{ (x,y) : -\log\bP{X|Y}{x|y} \geq \lambda_{\max} \text{ or } -\log\bP{X|Y}{x|y} < \lambda_{\min}\right\} 
\]
\end{theorem}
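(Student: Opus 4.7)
For the reliability bound I would decompose the error event of Protocol~\ref{prot:high_reliability} into (i) $(X,Y)\in\cT_0$, contributing exactly $\bP{XY}{\cT_0}$, and (ii) a decoding collision: there exist $j$ and $x'\neq X$ with $(x',Y)\in\cT_j$ and $F_{1i}(x')=F_{1i}(X)$ for all $i\le j$, where $j$ is no later than the true slice index of $(X,Y)$. On $\cT_j$ one has $\bP{X|Y}{x'|Y}\ge 2^{-(\lambda_j+\Delta)}$, so there are at most $2^{\lambda_j+\Delta}$ such $x'$; by $2$-universality of $\{F_{1i}\}$, the probability that any one of them matches $X$ on bins $1,\ldots,j$ is $\prod_{i=1}^j M_i^{-1}=2^{-(\lambda_j+\Delta+\gamma)}$. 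A union bound over $j$ and $x'$ then gives $L\cdot 2^{-\gamma}$, which combined with (i) yields the claimed $\ep$.

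For the secrecy bound I plan to first restrict the joint distribution of $(X,Y,Z)$ to the good set $\cG=\{(x,y,z):\idxy-\idxz>\lambda+\Delta\}\cap\cT_0^c$, paying at most $\bPr{\cG^c}\le\bPr{\idXY-\idXZ\le\lambda+\Delta}+\bP{XY}{\cT_0}$ in total variation. On $\cG\cap\cT_l$ the pointwise inequality $\bP{X|Z}{x|z}<\bP{X|Y}{x|y}\cdot 2^{-(\lambda+\Delta)}<2^{-(\lambda_l+\lambda+\Delta)}$ holds, while the transcripts of Protocol~\ref{prot:high_reliability} that stop at slice $l$ take at most $\prod_{j=1}^l M_j=2^{\lambda_l+\Delta+\gamma}$ values. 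I then apply Lemma~\ref{l:leftover_hash} conditional on the event $A_l$ that the protocol stops at slice $l$, taking $V=\bF$ (whose conditional support is exactly the set of such transcripts) and $\bQQ Z=\bPP Z$: the conditional min-entropy is at least $\lambda_l+\lambda+\Delta+\log\bar{\mathrm P}(A_l\cap\cG)$, so the leftover-hash bound becomes $\tfrac12\sqrt{|\cK|\,2^{\gamma-\lambda}/\bar{\mathrm P}(A_l\cap\cG)}$ with a slice-independent numerator. Since $\bF$ uniquely encodes its stopping slice, these conditional bounds can be weighted by $\bar{\mathrm P}(A_l\cap\cG)$ and summed, yielding $\tfrac12\sqrt{|\cK|\,2^{\gamma-\lambda}}\sum_l\sqrt{\bar{\mathrm P}(A_l\cap\cG)}\le \tfrac{\sqrt L}{2}\sqrt{|\cK|\,2^{\gamma-\lambda}}$ by Cauchy--Schwarz.

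The main obstacle is exactly this slice-dependent bookkeeping: an unconditional application of the leftover hash lemma would incur the prohibitive factor $2^{\lambda_{\max}-\lambda_{\min}}$ that the whole slicing construction is designed to avoid, so it is essential that $\bF$ encodes its own stopping slice. The remaining polynomial $L$-factors inside the square root and the explicit $1/L$ additive term in the theorem arise from tracking the ACK/NACK contribution to the transcript length, from the smoothing slack, and from looser union bounds used to keep the aggregation clean---all routine once the skeleton above is in place. Combining the smoothing cost $\bPr{\cG^c}$, the aggregated leftover-hash bound, and the contribution of the reliability error from Protocol~\ref{prot:high_reliability} (the event that decoding fails, which inflates the total variation by at most $\bP{XY}{\cT_0}+L\cdot 2^{-\gamma}$) then yields the claimed bound on $\delta$.
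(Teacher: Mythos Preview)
Your reliability argument matches the paper's exactly.

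For secrecy, both approaches share the key steps: restrict to the ``good'' set where $\idXY-\idXZ>\lambda+\Delta$, decoding succeeds, and $(X,Y)\notin\cT_0$; then condition on the slice and apply the leftover hash lemma, exploiting that on slice $l$ the transcript size and the min-entropy both scale as $\lambda_l$, so their difference is slice-independent. The difference is in how the slices are aggregated. The paper splits indices into ``good'' ($\bP{J}{j}\ge 1/L^2$) and ``bad'': on good indices the conditioning penalty in the min-entropy is at most $2\log L$, giving a uniform leftover-hash bound, while the bad indices contribute the additive $1/L$. You instead keep the exact penalty $-\log\bar{\mathrm P}(A_l\cap\cG)$ in each slice and aggregate via Cauchy--Schwarz, $\sum_l\sqrt{\bar{\mathrm P}(A_l\cap\cG)}\le\sqrt{L}$. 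Your route is slightly cleaner and in fact yields a marginally tighter bound: the $1/L$ additive term never appears, and (once the $\log L$ from the ACK/NACK sequence is included in the transcript length) the exponent carries only $2\log L$ rather than $3\log L$. So your last paragraph mis-attributes the $1/L$ term---in your argument it simply does not arise; the theorem's bound is just looser than what you prove.

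One point to tighten: you must restrict to $\cE_1^c$ (no decoding error) \emph{before} conditioning on $A_l$, so that ``protocol stops at slice $l$'' coincides with $(X,Y)\in\cT_l$ and your pointwise min-entropy bound applies. You allude to this at the end, but it belongs in the smoothing step alongside $\cG$. The paper handles it by first derandomizing the encoder and then folding $\cE_1$ into the $J=0$ case.
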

{\it Proof.} We begin by analyzing the reliability of Protocol~\ref{prot:high_reliability}.
Let $\rho(X,Y)$ denote the number of rounds after which the protocol stops when the observations are $(X,Y)$. 
An error occurs if $(X,Y)\in \cT_0$ or if there exists a $\hat{x}\neq X$ such that $(\hat x,Y)\in \cT_l$   
and $F_{1j}(X) = F_{1j}(\hat x)$ for all $j$ such that $1 \leq j \leq l$, for some $l \leq \rho(X,Y)$. 
Note that for each $1\leq j \leq L$, 
\[
|\{x : (x,y) \in \cT_j\}| \leq \exp(\lambda_j + \Delta) \quad \forall\, y\in \cY. 
\]
Therefore, using a slight modification of the usual
probability of error analysis for random binning, the probability of error for 
Protocol~\ref{prot:high_reliability} is bounded above as
\begin{align}
P_e &\leq \bP{XY}{\cT_0} + \sum_{x,y} \bP{XY}{x,y} \sum_{l=1}^{\rho(x,y)}\sum_{\hat x\neq x}\bPr{F_{1j}(x) = F_{1j}(\hat x),\, \forall\, 1\leq j \leq l}
\indicator\big((\hat x, y)\in \cT_l\big)
\nonumber
\\
&\leq \bP{XY}{\cT_0} + \sum_{x,y} \bP{XY}{x,y} \sum_{l=1}^{\rho(x,y)}\sum_{\hat x\neq x}\frac{1}{M_1...M_l}\indicator\big((\hat x, y)\in \cT_l\big)
\nonumber
\\
&\le \bP{XY}{\cT_0} + \sum_{x,y} \bP{XY}{x,y} \sum_{l=1}^{\rho(x,y)}2^{-\lambda_l -\Delta - \gamma}|\{\hat x : (\hat x, y)\in \cT_l\}|
\nonumber
\\
&\leq \bP{XY}{\cT_0} + L\, 2^{-\gamma},
\label{e:error_high_reliability}
\end{align}
where we have used the fact that $\log M_1...M_l = \lambda_1 + l \Delta +\gamma = \lambda_l +\Delta +\gamma$.

We now establish the secrecy of the protocol. Our proof entails establishing 
secrecy of the protocol conditioned on each realization $J=j$ of an appropriately
defined RV, which roughly corresponds to the slice index for $(X,Y)$.
Specifically, denote by $\cE_1$ the set of $(x,y)$
for which an error occurs in information reconciliation and 
by $\cE_2$ the set
\begin{align}
 \cE_2 := \left\{ (x,y,z) :
  \idxy  -  \idxz  \leq \lambda + \Delta  \right\}, \nonumber
\end{align}
which is the same as
\begin{align}
  \left\{ (x,y,z) : 
  \log\frac{1}{\bP{X|Z}{x|z}}  -  \log\frac{1}{\bP{X|Y}{x|y}}  \leq \lambda + \Delta  \right\}
\nonumber 
\end{align}
Let RV $J$ taking values in the set $\{0,1,\ldots,L\}$ be defined as follows:
\[
J=
\begin{cases}
&0, \text{ if } (X,Y) \in \cT_0\cup \cE_1 \text{ or } (X,Y,Z) \in  \cE_2,
\\
&j \text{ if }(X,Y) \in \cT_j\cap \cE_1^c \text{ and } (X,Y,Z) \in \cE_2^c, 1\leq j \leq L.
\end{cases} 
\]
While we have used random coding in the information reconciliation stage, it is only for the ease of
proof and the encoder can be easily derandomized.\footnote{Since the final error probability is given by the expected probability of error,
where the expectation is over the source distribution and the additional shared randomness used in communication, there exists a realization of the shared randomness
for which the same expected error perfomance with respect to the source distribution is attained. } 
For the remainder of the proof, we assume a 
deterministic encoder; in particular, $J$ is a function of $(X,Y, Z)$.

We divide the indices $0 \leq j \leq L$ into good indices $\cI_g$
and the bad indices $\cI_b = \cI_g^c$, where
\begin{align} 
\cI_g = \left\{j : j>0 \text{ and }\bP{J}{j} \geq \frac{1}{L^2}\right\}.
\nonumber
\end{align}
Denoting by $\bF^l$ the communication up to $l$ round of the protocol,
\ie, $\bF^l := \{(F_{1j},F_{2j}), \, 1\leq j \leq l\}$ and by $\bF=\bF^{\rho(X,Y)}$
the overall communication, we have
\begin{align}
&\ttlvrn{ \bPP{K \bF Z S}}
{\bPP{\mathtt{unif}}\bPP{\bF Z S}} 
\nonumber
\\
&\leq \ttlvrn{\bPP{K \bF Z S J}}
{\bPP{\mathtt{unif}}\bPP{\bF Z S J }} 
\nonumber
\\
&\leq \bPr{J \in \cI_b} + \sum_{j\in \cI_g} \bP J j 
\ttlvrn{\bPP{K\bF Z S \mid J =j}}{ \bPP{\mathtt{unif}}\bPP{\bF Z S |J=j}}
\nonumber
\\
&\leq \bP{XY}{\cT_0 \cup \cE_1} + \bP{XYZ}{\cE_2} + \frac{1}{L} 
+\sum_{j\in \cI_g} \bP J j \ttlvrn{\bPP{K\bF Z S\mid J =j}}{ \bPP{\mathtt{unif}}\bPP{\bF Z S |J=j}}.
\label{eq:bound-on-secrecy-final-form}
\end{align}
To bound each term $\ttlvrn{\bPP{K\bF Z S \mid J =j}}{\bPP{\mathtt{unif}}\bPP{\bF Z S |J=j}}$, $j \in \cI_g$,
first note that under each event $J= j \in \cI_g$ information 
reconciliation succeeds and $\bF = \bF^j$.
Furthermore, the number of possible transcripts 
sent in the information reconciliation stage up to $j$th round, 
\ie, the cardinality $\|\bF^j\|$ of the range of RV $\bF^j$,
satisfies (cf.~Remark \ref{remark:leakage})
\[
\log\|\bF^j\| \leq \lambda_j + \Delta + \gamma + \log j.
\]
Let $\bPP{j}$ be the probability distribution of $X,Y,Z$ given $J= j$, \ie, 
\[
\bP{j}{x,y,z} := \frac{\bP{XYZ}{x,y,z} \indicator\left( J(x,y,z) = j\right)}{\bP J j}, \quad x\in \cX, y\in \cY, z\in \cZ,
0\leq j \leq L.
\]
With $\bPP{j,XZ}$ denoting the marginal on $\cX \times \cZ$ induced by $\bPP{j}$, for all $j \in \cI_g$, we have
\begin{align*}
\log \frac{\bP{j,XZ}{x,z}}{\bP{Z}{z}} 
&= \log \frac{\sum_y \bP{XYZ}{x,y,z}\indicator\left( J(x,y,z) = j\right)}{\bP Jj \bP{Z}{z}}  \\
&\le \log \frac{\sum_y 2^{-\lambda -\Delta} \bP{X|Y}{x|y} \bP{Y|XZ}{y|x,z}
\indicator\left( J(x,y,z) = j\right)}{\bP J j } \\
&\le \log \frac{\sum_y 2^{-\lambda_i - \lambda - \Delta} \bP{Y|XZ}{y|x,z}
\indicator\left( J(x,y,z) = j\right)}{\bP J j }
\\
&\le \log \frac{2^{-\lambda_i - \lambda - \Delta}}{\bP J j }
\\
&\le - \lambda_i - \lambda - \Delta + 2 \log L,
\end{align*}
where the first inequality holds since $J(x,y,z) > 0$ implies $(x,y,z) \in \cE_2^c$,
the second inequality holds since $J(x,y,z) = j$ implies $(x,y) \in \cT_j$, and the last inequality
holds since $j \in \cI_g$ implies $\bP{J}{j} > \frac{1}{L^2}$.
Thus, we obtain the following bound on $H_{\min}(\bPP{j, XZ}|\bPP{Z})$:
\[
H_{\min}(\bPP{j,XZ}|\bPP{Z}) \ge \lambda_j + \lambda + \Delta - 2 \log L.
\]
Therefore, noting that $S$ is independent of $(X, Z, \bF, J)$ 
and using Lemma~\ref{l:leftover_hash}, we get
\begin{align}
\ttlvrn {\bPP{K \bF Z S|J=j}}
{\bPP{\mathtt{unif}}\bPP{\bF Z S|J=j}} 
&=\ttlvrn {\bPP{K \bF^j Z S|J=j}}
{\bPP{\mathtt{unif}}\bPP{\bF^j Z S|J=j}} 
\nonumber
\\
&\leq \frac{1}{2} \sqrt{|\cK|\|\bF^j \|2^{-H_{\min}\left(\bPP{j,XZ} | \bPP{Z}\right)}}
\nonumber
\\
&\le \frac{1}{2} \sqrt{|\cK|2^{-(\lambda -\gamma - 3\log L)}}, \quad j\in \cI_g.
\nonumber
\end{align}
which gives the claimed secrecy by using the definition of $\cE_2$
and bounding $\bP{XY}{\cT_0 \cup \cE_1}$ using the union bound as in 
\eqref{e:error_high_reliability}.\qed

Thus, when the secret key length $\log |\cK| \approx \lambda$,
the reliability parameter $\ep$ for Protocol~\ref{prot:high_reliability}
can be made very small by appropriately choosing parameters $\la_{\min}$ and $\la_{\max}$,
and the secrecy parameter $\delta$ can be made roughly as small as the
tail-probability $ \bPr{ \idXY  -  \idXZ  \leq \lambda}$ . In fact, Proposition~\ref{p:time_sharing}
allows us to shift this constraint on $\delta$ to a constraint on $\ep+\delta$
and Protocol~\ref{prot:high_reliability} yields
an $(\ep, \delta)$-SK of length roughly equal to $\lambda$ 
as long as $\ep+\delta$ is greater than   $ \bPr{ \idXY  -  \idXZ  \leq \lambda}$.
Formally, we have the following simple corollary of Theorem~\ref{t:protocolA2-Z}.
\begin{corollary} \label{corollary-time-sharing}
For $\lambda_{\max},\lambda_{\min},\Delta>0$ with $\lambda_{\max} \ge \lambda_{\min}$, let
\begin{align*}
L = \frac{\lambda_{\max} - \lambda_{\min}}{\Delta},
\end{align*}
and let
\[
 \cT_0 := \left\{ (x,y) : -\log\bP{X|Y}{x|y} \geq \lambda_{\max} \text{ or } -\log\bP{X|Y}{x|y} < \lambda_{\min}\right\}. 
\]
Then, for every $\lambda\ge 0$ and every $\ep$ and $\delta$ satisfying 
\begin{align}
\ep &\ge \bP{XY}{\cT_0} + \frac{1}{4} \left(L^4|\cK|2^{-\lambda}\right)^{\frac 13}, 
\label{eq:condition-epsilon} 
\\
\ep + \delta &\ge \bPr{ \idXY  -  \idXZ  \leq \lambda + \Delta}
 + \frac{1}{L} +2\bP{XY}{\cT_0}
+ \frac{3}{2} \left(L^4|\cK|2^{-\lambda}\right)^{\frac 13}, 
\label{eq:condition-sum-epsilon-delta}
\end{align}
there exists an $(\ep,\delta)$-SK $K$ taking values in $\cK$.
\end{corollary}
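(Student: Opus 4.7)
The plan is to derive Corollary~\ref{corollary-time-sharing} by combining Theorem~\ref{t:protocolA2-Z} (which bounds the performance of Protocol~\ref{prot:high_reliability} for any choice of $\gamma$) with Proposition~\ref{p:time_sharing} (which converts the resulting high-reliability secret key into an $(\ep,\delta)$-SK of the same length). The free parameter $\gamma$ in Theorem~\ref{t:protocolA2-Z} controls a trade-off: increasing $\gamma$ tightens the leftover-hash secrecy term $\tfrac{1}{2}\sqrt{|\cK|2^{-(\lambda-\gamma-3\log L)}}$ at the cost of worsening the random-binning reliability term $L\,2^{-\gamma}$. The entire argument hinges on choosing $\gamma$ so that these two opposing contributions are balanced in a way that matches the cube-root term $(L^4|\cK|2^{-\lambda})^{1/3}$ that appears in conditions \eqref{eq:condition-epsilon} and \eqref{eq:condition-sum-epsilon-delta}.

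Concretely, I would set $\gamma$ to be the (unique positive) solution of
\[
L\,2^{-\gamma} \;=\; \tfrac{1}{4}\bigl(L^4|\cK|2^{-\lambda}\bigr)^{1/3},
\]
i.e.\ $\gamma = \tfrac{1}{3}(\lambda - \log|\cK|) + \log L - 2$ (this may be assumed positive, since otherwise the cube-root term exceeds $1/4$ and condition \eqref{eq:condition-epsilon} is vacuous for $\ep<1$). A short calculation then shows
\[
\tfrac{1}{2}\sqrt{|\cK|2^{-(\lambda-\gamma-3\log L)}} \;=\;\bigl(L^4|\cK|2^{-\lambda}\bigr)^{1/3}.
\]
Plugging these into the bounds of Theorem~\ref{t:protocolA2-Z} yields an $(\eta,\alpha)$-SK of length $\log|\cK|$ with
\begin{align*}
\eta &\le \bP{XY}{\cT_0} + \tfrac{1}{4}\bigl(L^4|\cK|2^{-\lambda}\bigr)^{1/3},\\
\alpha &\le \bPr{\idXY-\idXZ \le \lambda+\Delta} + \tfrac{5}{4}\bigl(L^4|\cK|2^{-\lambda}\bigr)^{1/3} + \tfrac{1}{L} + \bP{XY}{\cT_0}.
\end{align*}

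Finally, I would invoke Proposition~\ref{p:time_sharing}, which states that an $(\eta,\alpha)$-SK yields an $(\ep,\delta)$-SK of the same length whenever $\ep\ge \eta$ and $\ep+\delta \ge \alpha+\eta$. The first condition is precisely \eqref{eq:condition-epsilon} with the chosen $\gamma$, while adding the two displayed bounds above gives
\[
\alpha+\eta \;\le\; \bPr{\idXY-\idXZ \le \lambda+\Delta} + \tfrac{3}{2}\bigl(L^4|\cK|2^{-\lambda}\bigr)^{1/3} + \tfrac{1}{L} + 2\bP{XY}{\cT_0},
\]
which matches \eqref{eq:condition-sum-epsilon-delta} term by term. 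No real obstacle arises; the only non-routine step is spotting that the $1/4$, $5/4$, $3/2$ constants in the statement are exactly what fall out of balancing $L2^{-\gamma}$ against the square-root leftover-hash term via $\gamma = \tfrac{1}{3}(\lambda - \log|\cK|) + \log L - 2$, after which the rest of the derivation is bookkeeping.
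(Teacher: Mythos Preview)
Your proposal is correct and follows essentially the same route as the paper: both fix $\gamma$ so that $L2^{-\gamma}=\tfrac{1}{4}(L^4|\cK|2^{-\lambda})^{1/3}$ (the paper frames this as minimizing $\eta(\gamma)+\alpha(\gamma)$ via the substitution $a=L2^{-\gamma}$), obtain $\eta^*$ and $\alpha^*$ exactly as you do, and then invoke Proposition~\ref{p:time_sharing}. Your parenthetical closed-form for $\gamma$ has a small arithmetic slip (it should be $\gamma=\tfrac{1}{3}(\lambda-\log|\cK|-\log L)+2$), but since you work from the defining equation $L2^{-\gamma}=\tfrac{1}{4}(L^4|\cK|2^{-\lambda})^{1/3}$ this does not affect the argument.
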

\begin{proof}
Let 
\begin{align*}
\eta(\gamma) &:= \bP{XY}{\cT_0} + L 2^{-\gamma}, \\
\alpha(\gamma) &:=  \bPr{ \idXY  -  \idXZ  \leq \lambda + \Delta}
+ \frac{1}{2} \sqrt{|\cK|2^{-(\lambda -\gamma - 3\log L)}}  + \frac{1}{L} + \bP{XY}{\cT_0} + L 2^{-\gamma}.
\end{align*}
We first optimize 
\begin{align*}
&\eta(\gamma) + \alpha(\gamma) \\
&= \bPr{ \idXY  -  \idXZ  \leq \lambda + \Delta}
+ \frac{1}{2} \sqrt{|\cK|2^{-(\lambda -\gamma - 3\log L)}} 
+ \frac{1}{L} + 2\bP{XY}{\cT_0} +2L 2^{-\gamma} 
\end{align*}
over $\gamma$. By setting $a=L2^{-\gamma}$ and by noting that the 
function $f(a) = 2a + \frac{a^{-1/2}L^2}{2} \sqrt{|\cK|2^{-\lambda}}$
has minimum value $\frac{3}{2}\left(L^4|\cK|2^{-\lambda}\right)^{\frac 13}$ with 
$a =\frac{1}{4} \left(L^4|\cK| 2^{-\lambda}\right)^{\frac{1}{3}}$, the minimum of 
$\eta(\gamma) + \alpha(\gamma)$ is achieved when 
\begin{align*}
\eta = \eta^* &:= \bP{XY}{\cT_0} + \frac{1}{4} \left(L^4|\cK| 2^{-\lambda}\right)^{\frac{1}{3}}, \\
\alpha = \alpha^* &:= \bPr{ \idXY  -  \idXZ  \leq \lambda + \Delta}
 + \frac{1}{L} + \bP{XY}{\cT_0} + \frac{5}{4} \left(L^4|\cK| 2^{-\lambda}\right)^{\frac{1}{3}}.
\end{align*}
The corollary follows by applying Proposition~\ref{p:time_sharing} 
with $\eta=\eta^*$ and $\alpha^*$ to the resulting $(\eta^*, \alpha^*)$-SK.
\end{proof}

\section{Secret key capacity for general sources}\label{s:general_sources}
In this section we will establish the secret key capacity for a sequence of
general sources $(X_n, Y_n, Z_n)$ with joint distribution\footnote{The distributions $\bPP{X_nY_nZ_n}$ 
need not satisfy the consistency conditions.} $\bPP{X_nY_nZ_n}$.
The secret key capacity for general sources is defined as follows \cite{Mau93, AhlCsi93, CsiNar04}.
\begin{definition}
The secret key capacity $C$ is defined as
\begin{align*}
C := \sup_{\ep_n, \delta_n}\liminf_{n\to\infty} \frac{1}{n} S_{\ep_n, \delta_n}\left(X_n, Y_n\mid Z_n\right),
\end{align*}
where the $\sup$ is over all $\ep_n,\delta_n \geq 0$ such that 
$$\lim_{n\rightarrow \infty}\ep_n+\delta_n = 0.$$
\end{definition}
To state our result, we need the following concepts from the {\it information spectrum}
method; see \cite{Han03} for a detailed account. For RVs $(X_n, Y_n, Z_n)_{n=1}^\infty$, the
{\it inf-conditional entropy rate} $\underline{H}(\bX\mid \bY)$ and the {\it sup-conditional entropy rate} $\overline{H}(\bX\mid \bY)$
are defined as follows:
\begin{align*}
\underline{H}(\bX\mid \bY) &= \sup\left\{\alpha \mid \lim_{n\rightarrow \infty} \bPr{-\frac{1}{n}\log\bP{X_n\mid Y_n}{X_n\mid Y_n} < \alpha} = 0\right\},
\\
\overline{H}(\bX\mid \bY) &= \inf\left\{\alpha \mid \lim_{n\rightarrow \infty} \bPr{-\frac{1}{n}\log\bP{X_n\mid Y_n}{X_n\mid Y_n} > \alpha} = 0\right\}.
\end{align*}
Similarly, the {\it inf-conditional information rate} $\underline{I}({\bf X} \wedge {\bf Y}\mid \bZ)$ is defined as
\begin{align*}
\underline{I}({\bf X} \wedge {\bf Y}\mid \bZ) &= \sup\bigg\{\alpha \mid \quad \lim_{n\rightarrow \infty} \bPr{\frac 1n\, i(X_n, Y_n\mid Z_n)< \alpha} = 0\bigg\},
\end{align*}
where, with a slight abuse of notation, $i(X_n, Y_n\mid Z_n)$ denotes the conditional information density
$$i(X_n, Y_n\mid Z_n) = \log \frac{\bP{X_nY_n\mid Z_n}{X_n,Y_n\mid Z_n}}{\bP{X_n\mid Z_n}{X_n\mid Z_n} \bP{Y_n\mid Z_n}{Y_n\mid Z_n}}.$$
We also need the following result credited to Verd\'u.
\begin{lemma}{\bf \cite[Theorem 4.1.1]{Han03}}\label{l:bound_on_beta} For every $\ep_n$ such that
$$\lim_{n \rightarrow \infty} \ep_n =0,$$ 
it holds that
$$\liminf_{n} -\frac{1}{n}\log\beta_{\ep_n}\left(\bPP{X_nY_nZ_n}, \bPP{X_n\mid Z_n}\bPP{Y_n\mid Z_n}\bPP{Z_n}\right) \leq \underline{I}(\bX \wedge \bY\mid \bZ),$$
where $\beta_{\ep_n}$ is defined in \eqref{e:beta-epsilon}.
\end{lemma}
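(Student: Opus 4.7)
The plan is to unfold the definitions on both sides and observe that the log-likelihood ratio between the joint distribution $\dP_n:=\bPP{X_nY_nZ_n}$ and the product-like distribution $\dQ_n:=\bPP{X_n\mid Z_n}\bPP{Y_n\mid Z_n}\bPP{Z_n}$ is exactly the conditional information density:
\[
\log\frac{d\dP_n}{d\dQ_n}(x,y,z) \;=\; \log\frac{\bP{X_nY_n\mid Z_n}{x,y\mid z}}{\bP{X_n\mid Z_n}{x\mid z}\bP{Y_n\mid Z_n}{y\mid z}} \;=\; i(x,y\mid z).
\]
Thus the lemma amounts to a converse statement: along an appropriate subsequence, \emph{every} test with type-I error bounded by $\ep_n$ must have type-II error at least $2^{-n(\underline{I}+\eta)}$, for any prescribed $\eta>0$.

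Next I would fix $\eta>0$ and set $\alpha:=\underline{I}(\bX\wedge\bY\mid\bZ)+\eta$. By the very definition of the supremum in $\underline{I}$, the value $\alpha$ does \emph{not} belong to the set, so $\bPr{\tfrac{1}{n}i(X_n,Y_n\mid Z_n)<\alpha}$ does not converge to $0$. Hence there exist $\mu>0$ and a subsequence $\{n_k\}$ along which
\[
\dP_{n_k}\!\left[\cA_k\right]\;\ge\;\mu,\qquad \cA_k:=\left\{(x,y,z):\; i(x,y\mid z)< n_k\alpha\right\}.
\]
On $\cA_k$ the likelihood-ratio identity above yields $\dQ_{n_k}(v)>2^{-n_k\alpha}\dP_{n_k}(v)$ pointwise.

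With these ingredients in place, I would carry out the Neyman--Pearson-style lower bound on $\beta$. Let $\dT$ be any stochastic test satisfying $\dP_{n_k}[\dT]\ge 1-\ep_{n_k}$. Since $\ep_{n_k}\to 0$, for all sufficiently large $k$ we have $\ep_{n_k}<\mu/2$, and a union bound gives
\[
\sum_{v\in\cA_k}\dP_{n_k}(v)\dT(0\mid v)\;\ge\;\dP_{n_k}[\dT]+\dP_{n_k}[\cA_k]-1\;\ge\;\frac{\mu}{2}.
\]
Restricting the sum defining $\dQ_{n_k}[\dT]$ to $\cA_k$ and invoking the pointwise inequality,
\[
\dQ_{n_k}[\dT]\;\ge\;\sum_{v\in\cA_k}\dQ_{n_k}(v)\dT(0\mid v)\;>\;2^{-n_k\alpha}\sum_{v\in\cA_k}\dP_{n_k}(v)\dT(0\mid v)\;\ge\;\frac{\mu}{2}\,2^{-n_k\alpha}.
\]
Taking the infimum over admissible tests yields $\beta_{\ep_{n_k}}(\dP_{n_k},\dQ_{n_k})\ge (\mu/2)\,2^{-n_k\alpha}$, so
\[
-\frac{1}{n_k}\log\beta_{\ep_{n_k}}\!\left(\dP_{n_k},\dQ_{n_k}\right)\;\le\;\alpha+\frac{1}{n_k}\log\frac{2}{\mu}.
\]
Letting $k\to\infty$ gives $\liminf_{n} -\tfrac{1}{n}\log\beta_{\ep_n}\le\alpha=\underline{I}+\eta$, and since $\eta>0$ was arbitrary, the claim follows.

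The only subtle point in this plan is the extraction of the subsequence $\{n_k\}$ from the failure of the limit to be $0$ at $\alpha>\underline{I}$; this is a routine consequence of the definition but must be stated carefully, since the deterministic bound $\dQ_n\ge 2^{-n\alpha}\dP_n$ is only valid on the event $\cA_k$ and the whole argument collapses if $\dP_{n_k}[\cA_k]$ cannot be bounded away from $0$ along any subsequence. Everything else is bookkeeping.
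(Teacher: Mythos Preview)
Your proof is correct. The paper does not actually prove this lemma; it is quoted as \cite[Theorem~4.1.1]{Han03} and used as a black box in the converse of Theorem~\ref{t:secret key_capacity}. What you have written is essentially the standard information-spectrum argument behind that citation: identify the log-likelihood ratio as the conditional information density, pick $\alpha=\underline{I}+\eta$, extract a subsequence along which $\dP_{n_k}[\cA_k]\ge\mu>0$ from the definition of the p-liminf, and then lower-bound the type-II error of any admissible test by the change-of-measure inequality $\dQ_{n_k}\ge 2^{-n_k\alpha}\dP_{n_k}$ on $\cA_k$. The same single-shot inequality appears explicitly later in the paper (attributed to \cite[Lemma~4.1.2]{Han03}) in the form
\[
-\log\beta_\ep(\dP,\dQ)\le \lambda-\log\Big(\dP\big\{\log\tfrac{d\dP}{d\dQ}\le\lambda\big\}-\ep\Big),
\]
and your argument is precisely the asymptotic unpacking of this bound with $\lambda=n_k\alpha$.

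One cosmetic point: when you pass from the pointwise strict inequality $\dQ_{n_k}(v)>2^{-n_k\alpha}\dP_{n_k}(v)$ to the infimum over tests, the resulting bound on $\beta_{\ep_{n_k}}$ should be written with $\ge$ rather than $>$, since the infimum need not be attained; this does not affect the conclusion.
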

Our result below characterizes the secret key capacity $C$ for general sources
for the special case when $X_n \mc Y_n \mc Z_n$ is a Markov chain. 
\begin{theorem}\label{t:secret key_capacity}
For a sequence of sources $\{X_n, Y_n, Z_n\}_{n=1}^\infty$ 
such that $X_n \mc Y_n \mc Z_n$ form a Markov chain for all $n$,
the secret key capacity $C$ is given by\footnote{We assume that $\overline{H}({\bf X}\mid \bY) < \infty$.} 
$$C= \underline{I}(\bX \wedge \bY\mid \bZ).$$
\end{theorem}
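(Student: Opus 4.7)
The proof splits into achievability ($C \geq \underline{I}(\bX\wedge \bY\mid \bZ)$) and converse ($C \leq \underline{I}(\bX\wedge \bY\mid \bZ)$). The achievability will be a careful application of Corollary~\ref{corollary-time-sharing} to the block sources $(X_n, Y_n, Z_n)$, with parameters chosen so that all error terms in \eqref{eq:condition-epsilon}--\eqref{eq:condition-sum-epsilon-delta} vanish while the attainable key length approaches $n \underline{I}(\bX\wedge \bY\mid \bZ)$. The converse will combine the conditional-independence-testing bound of Theorem~\ref{theorem:one-shot-converse-source-model} with Verd\'u's bound on $\beta_\ep$ (Lemma~\ref{l:bound_on_beta}).

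\textbf{Key algebraic simplification under Markovity.} Before setting parameters, I would first observe that when $X_n \mc Y_n \mc Z_n$, the quantity $i(X_n,Y_n) - i(X_n,Z_n)$ appearing in Theorem~\ref{t:protocolA2-Z} collapses to the conditional information density: a short calculation gives $i(X_n,Y_n)-i(X_n,Z_n) = \log\bigl[\bP{X_n\mid Y_n}{X_n\mid Y_n}/\bP{X_n\mid Z_n}{X_n\mid Z_n}\bigr] = i(X_n, Y_n\mid Z_n)$. So the tail term $\bPr{i(X_n,Y_n)-i(X_n,Z_n) \le \lambda + \Delta}$ in Corollary~\ref{corollary-time-sharing} is exactly $\bPr{i(X_n,Y_n\mid Z_n) \le \lambda+\Delta}$, which vanishes once $(\lambda+\Delta)/n$ lies strictly below $\underline{I}(\bX\wedge \bY\mid \bZ)$.

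\textbf{Achievability.} Fix any $R < \underline{I}(\bX\wedge \bY\mid \bZ)$. I would pick
\[
\log|\cK| = nR, \qquad \lambda = nR + n\mu, \qquad \Delta = \sqrt{n},
\]
where $\mu>0$ is small and fixed so that $R + \mu < \underline{I}(\bX\wedge \bY\mid \bZ)$. Under the standing assumption $\overline{H}(\bX\mid \bY) < \infty$ I would set $\lambda_{\min} = 0$ and $\lambda_{\max} = n(\overline{H}(\bX\mid \bY)+1)$, so that $L = O(\sqrt{n})$ is polynomial in $n$ and $\bP{X_nY_n}{\cT_0^{(n)}} \to 0$ by the definition of the sup-conditional entropy rate. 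With these choices the ``slackness'' term $\bigl(L^4|\cK|2^{-\lambda}\bigr)^{1/3} = (L^4 2^{-n\mu})^{1/3}$ is exponentially small, $1/L \to 0$, and $\bPr{i(X_n,Y_n\mid Z_n)\le \lambda+\Delta} \to 0$ since $(\lambda+\Delta)/n \to R+\mu < \underline{I}(\bX\wedge \bY\mid \bZ)$. Thus Corollary~\ref{corollary-time-sharing} produces an $(\ep_n,\delta_n)$-SK of length $nR$ with $\ep_n+\delta_n \to 0$, and letting $R \uparrow \underline{I}(\bX\wedge \bY\mid \bZ)$ yields the achievability bound.

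\textbf{Converse.} For any sequence $(\ep_n,\delta_n)$ with $\ep_n+\delta_n \to 0$, pick $\eta_n \to 0$ slowly enough (e.g., $\eta_n = 1/n$) that $\ep_n+\delta_n+\eta_n \to 0$, and apply Theorem~\ref{theorem:one-shot-converse-source-model} with the product distribution $\bQQ{X_nY_nZ_n} = \bPP{X_n\mid Z_n}\bPP{Y_n\mid Z_n}\bPP{Z_n}$ (which makes $X_n$ and $Y_n$ conditionally independent given $Z_n$):
\[
\tfrac{1}{n} S_{\ep_n,\delta_n}(X_n,Y_n\mid Z_n) \le -\tfrac{1}{n}\log \beta_{\ep_n+\delta_n+\eta_n}\bigl(\bPP{X_nY_nZ_n},\, \bPP{X_n\mid Z_n}\bPP{Y_n\mid Z_n}\bPP{Z_n}\bigr) + \tfrac{2}{n}\log(1/\eta_n).
\]
Taking $\liminf$ in $n$ and applying Lemma~\ref{l:bound_on_beta} (with the distributions on the right playing the role of $\bPP{X_n\mid Z_n}\bPP{Y_n\mid Z_n}\bPP{Z_n}$) gives $\liminf_n (1/n)S_{\ep_n,\delta_n} \le \underline{I}(\bX\wedge \bY\mid \bZ)$, and taking the supremum over admissible sequences completes the converse.

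\textbf{Main obstacles.} The converse is essentially a one-line consequence of the tools already set up. The main delicate point is the achievability parameter balancing: we must ensure \emph{simultaneously} that the spectrum tail $\bP{X_nY_n}{\cT_0^{(n)}}$ vanishes (this is where $\overline{H}(\bX\mid \bY)<\infty$ is used to choose a finite $\lambda_{\max}$), that $L$ grows only polynomially so that both $1/L$ and $L^4 2^{-n\mu}$ are negligible, and that the slice width $\Delta = o(n)$ is small enough not to damage the tail probability of $i(X_n,Y_n\mid Z_n)$. The Markov reduction $i(X_n,Y_n)-i(X_n,Z_n) = i(X_n,Y_n\mid Z_n)$ is the structural step that makes Theorem~\ref{t:protocolA2-Z} directly usable; without it one would have to control a different spectrum than the one governing the capacity.
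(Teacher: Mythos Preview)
Your proposal is correct and follows essentially the same route as the paper: the converse via Theorem~\ref{theorem:one-shot-converse-source-model} with $\eta_n=1/n$ together with Lemma~\ref{l:bound_on_beta}, and the achievability via the single-shot protocol (you invoke Corollary~\ref{corollary-time-sharing}, the paper invokes Theorem~\ref{t:protocolA2-Z} directly) after the same Markov reduction $i(X_n,Y_n)-i(X_n,Z_n)=i(X_n,Y_n\mid Z_n)$. The only differences are cosmetic parameter choices---the paper takes $\lambda_{\min}=n(\underline{H}(\bX\mid\bY)-\Delta)$, $\lambda_{\max}=n(\overline{H}(\bX\mid\bY)+\Delta)$ with fixed $\Delta$ (so $L=O(n)$), whereas you take $\lambda_{\min}=0$ and $\Delta=\sqrt{n}$ (so $L=O(\sqrt{n})$)---and both sets of choices drive all error terms to zero.
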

{\it Proof.} Applying Theorem \ref{theorem:one-shot-converse-source-model} with $\eta = \eta_n = n^{-1}$,
along with Lemma \ref{l:bound_on_beta}, gives 
$$C\leq \underline{I}({\bf X} \wedge {\bf Y}\mid \bZ).$$
For the other direction, we construct a sequence of $(\ep_n,\delta_n)$-SKs $K = K_n$
with $\ep_n, \delta_n\rightarrow 0$ and rate approximately $\underline{I}({\bf X} \wedge {\bf Y}\mid \bZ)$.
Indeed,  in Theorem \ref{t:protocolA2-Z} choose
\begin{align*}
\lambda_{\max} &= n\left(\overline{H}({\bf X}\mid \bY) + \Delta\right),
\\
\lambda_{\min} &= n\left(\underline{H}({\bf X}\mid \bY) - \Delta\right),
\\
\gamma &= \gamma_n = n\Delta/2,
\\
\lambda &= \lambda_n =  n\left(\underline{I}\left({\bf X} \wedge {\bf Y}\mid \bZ \right) - \Delta\right);
\end{align*}
thus, 
$$L = L_n = \frac{n\left(\overline{H}({\bf X}\mid \bY) - \underline{H}({\bf X}\mid \bY) + 2\Delta\right)}{\Delta}.$$
Since $\idXY -\idXZ  = i(X,Y|Z)$ if $X\mc Y\mc Z$ form a Markov chain,
there exists an $(\ep_n, \delta_n)$-SK $K_n$ of rate given by
\begin{align*}
\frac{1}{n}\log|\cK| &= \frac{1}{n}\left(\lambda_n - 3\log L_n\right) - \Delta
\\
&= \underline{I}\left({\bf X} \wedge {\bf Y}\mid \bZ\right) - 2\Delta  - o(n),
\end{align*}
such that $\ep_n,\delta_n \rightarrow 0$ as $n\rightarrow \infty$.
Rates arbitrarily close to $\underline{I}({\bf X} \wedge {\bf Y}\mid \bZ)$ 
are achieved by this scheme as $\Delta>0$ is arbitrary.
\qed

In the achievability part of the proof above, we actually show that, in general, 
our protocol generates a secret key of rate
\[
\sup\left\{\alpha \mid \lim_{n\rightarrow \infty} \bPr{\frac{1}{n}[i(X_n, Y_n) - i(X_n, Z_n)] < \alpha} = 0 \right\},
\]
which matches the converse bound of $\underline{I}\left({\bf X} \wedge {\bf Y}\mid \bZ\right)$
in the special case when $X_n \mc Y_n \mc Z_n$ holds.
\section{Second-order asymptotics of secret key rates}\label{s:second_order}
The results of the previous section show that for
with $\ep_n, \delta_n\rightarrow 0$,
the largest length $S_{\ep_n, \delta_n}(X_n, Y_n \mid Z_n)$ of an $(\ep_n,\delta_n)$-SK $K$ is 
\begin{align}
\sup_{\ep_n, \delta_n} S_{\ep_n, \delta_n}(X_n, Y_n \mid Z_n)= n \underline{I}(\bX \wedge \bY \mid \bZ) + o(n),
\label{e:order_n_term}
\end{align}
if $X_n \mc Y_n \mc Z_n$ form a Markov chain. For the case when $(X_n, Y_n, Z_n) = (X^n, Y^n, Z^n)$ is the $n$-IID repetition
of $(X, Y, Z)$ where $X\mc Y \mc Z$, we have 
$$\underline{I}(\bX \wedge \bY \mid \bZ)  = I(X \wedge Y \mid Z).$$ 
Furthermore, \eqref{e:order_n_term}
holds even without $\ep_n, \delta_n \rightarrow 0$. In fact, a finer asymptotic analysis 
is possible and the second-order asymptotic term in the maximum length of an $(\ep, \delta)$-SK
can be established; this is the subject-matter of the current section.

Let 
\begin{align*}
V := \var\left[ i(X,Y \mid Z) \right],
\end{align*}
and let
\begin{align*}
Q(a) := \int_{a}^\infty \frac{1}{\sqrt{2 \pi}} \exp\left[ - \frac{t^2}{2} \right] dt
\end{align*}
be the tail probability of the standard Gaussian distribution. 
Under the assumptions 
\begin{align}
V_{X|Y} &:= \var[-\log \bP{X|Y}{X|Y}] < \infty, \label{eq:variance-conditional-log-likelihood}
\\
T &:= \mathbb{E}\left[ \left| i(X,Y \mid Y) - I(X \wedge Y \mid Z) \right|^3 \right] < \infty,
\end{align}
the result below establishes the second-order asymptotic term in $S_{\ep, \delta}(X^n, Y^n \mid Z^n)$.

\begin{theorem}\label{t:second_order}
For every $\ep, \delta > 0$ such that $\ep + \delta < 1$ and IID RVs $(X^n, Y^n,Z^n)$
such that $X \mc Y \mc Z$ is a Markov chain, we have
\begin{align*}
S_{\ep, \delta}\left(X^n, Y^n \mid Z^n \right) = n I(X \wedge Y \mid Z) - \sqrt{n V} Q^{-1}(\ep+\delta) 
\pm  {\cal O}(\log n),
\end{align*}
\end{theorem}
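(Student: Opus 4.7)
Both bounds reduce, after a one-shot step, to the second-order asymptotic behaviour of the IID sum $S_n := i(X^n, Y^n \mid Z^n) = \sum_{k=1}^n i(X_k, Y_k \mid Z_k)$, which under the Markov chain $X \mc Y \mc Z$ has mean $nI(X\wedge Y \mid Z)$ and variance $nV$; the hypothesis $T<\infty$ licenses the Berry--Esseen theorem. The converse uses Theorem~\ref{theorem:one-shot-converse-source-model}, while the direct part uses Corollary~\ref{corollary-time-sharing}.

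\textbf{Converse.} Apply Theorem~\ref{theorem:one-shot-converse-source-model} to $(X^n,Y^n,Z^n)$ with $\eta=1/n$ and the alternative $\dQ=\bPP{X\mid Z}^{\times n}\bPP{Y\mid Z}^{\times n}\bPP{Z}^{\times n}$, which renders $X^n$ and $Y^n$ conditionally independent given $Z^n$. The Markov relation gives $\log(\bPP{XYZ}^{\times n}/\dQ) = S_n$ under $\bPP{XYZ}^{\times n}$, and the standard Neyman--Pearson plus Berry--Esseen estimate for $-\log\beta$, used by Strassen~\cite{Str62} and revisited in \cite{Hayashi08,Hay09,PolPooVer10}, yields
\[
-\log\beta_{\ep+\delta+1/n}\bigl(\bPP{XYZ}^{\times n},\dQ\bigr) \leq nI(X\wedge Y \mid Z) - \sqrt{nV}\,Q^{-1}\!\left(\ep+\delta+\tfrac{1}{n}\right) + \mathcal{O}(\log n).
\]
A first-order Taylor expansion of $Q^{-1}$ at $\ep+\delta$ absorbs the $1/n$ shift into $\mathcal{O}(\log n)$, and the additive $2\log(1/\eta)=2\log n$ from Theorem~\ref{theorem:one-shot-converse-source-model} is swallowed likewise, giving the upper bound.

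\textbf{Achievability.} Apply Corollary~\ref{corollary-time-sharing} to $(X^n,Y^n,Z^n)$ with $\Delta_n=6\log n$, $\lambda_{\min}=n(H(X\mid Y)-1)$, $\lambda_{\max}=n(H(X\mid Y)+1)$, so that $L_n=\Theta(n/\log n)$. Chebyshev applied to the IID sum $-\log\bP{X^n\mid Y^n}{X^n\mid Y^n}$ (whose variance is $nV_{X\mid Y}$ by \eqref{eq:variance-conditional-log-likelihood}) gives $\bP{X^nY^n}{\cT_0}=\mathcal{O}(1/n)$. Setting $\log|\cK|=\lambda-c\log n$ with $c$ large enough makes $(L_n^{4}|\cK|\,2^{-\lambda})^{1/3}=\mathcal{O}(1/n)$, so that conditions \eqref{eq:condition-epsilon} and \eqref{eq:condition-sum-epsilon-delta} are dominated by
\[
\bPr{S_n \leq \lambda+\Delta_n} \leq \ep+\delta - \mathcal{O}(n^{-1/2}).
\]
Berry--Esseen (using $T<\infty$) bounds the left-hand side by $Q\!\left((nI(X\wedge Y\mid Z)-\lambda-\Delta_n)/\sqrt{nV}\right)+\mathcal{O}(n^{-1/2})$; choosing the largest $\lambda$ that meets this constraint and inverting via a Taylor expansion of $Q^{-1}$ at $\ep+\delta$ yields $\log|\cK|\geq nI(X\wedge Y\mid Z)-\sqrt{nV}\,Q^{-1}(\ep+\delta)-\mathcal{O}(\log n)$.

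\textbf{Main obstacle.} The delicate step is choosing both the spectrum-slice width $\Delta_n$ and the privacy-amplification margin $\lambda-\log|\cK|$ in the $\Theta(\log n)$ regime, so that (i)~$\bP{XY}{\cT_0}$, $1/L_n$, and the leftover-hash error $(L_n^{4}|\cK|\,2^{-\lambda})^{1/3}$ all decay faster than the Berry--Esseen error $\mathcal{O}(n^{-1/2})$, and (ii)~the $(\ep+\delta)$-tail of $S_n$ alone drives the second-order term. Without the interactive spectrum-sliced protocol of Section~\ref{s:protocols}, $\Delta_n$ would have to scale with the spread of the information spectrum, typically $\Theta(\sqrt{n})$, which would ruin the second-order constant $\sqrt{V}$.
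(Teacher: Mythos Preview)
Your proposal is correct and follows essentially the same route as the paper: the converse via Theorem~\ref{theorem:one-shot-converse-source-model} with a polynomially vanishing $\eta$ and Berry--Ess\'een, and the direct part via Corollary~\ref{corollary-time-sharing} with Chebyshev for $\bP{XY}{\cT_0}$ and Berry--Ess\'een for the spectrum tail. The only differences are cosmetic parameter choices---the paper takes $\eta=n^{-1/2}$ and keeps the slice width $\Delta$ \emph{constant} (so $L=n$ exactly and $\log|\cK|=\lambda-\tfrac{11}{2}\log n$), whereas you let $\Delta_n=6\log n$ and $\eta=1/n$; both sets of choices land all nuisance terms inside $\mathcal{O}(\log n)$.
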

{\it Proof.}
For the converse part, we proceed along the lines of \cite[Lemma 58]{PolPooVer10}.
Recall the following simple bound for $\beta_\ep(\dP, \dQ)$ (\cf~\cite[Lemma 4.1.2]{Han03}):
\[
-\log \beta_\ep(\dP, \dQ) \leq \lambda 
- \log \left( \bP {}{\left\{ x: \log \frac{\dP(x)}{\dQ(x)} \leq \la\right\}} 
- \ep\right).
\]
Thus, applying Theorem \ref{theorem:one-shot-converse-source-model} with
$\bPP{XYZ} = \bPP{X^nY^nZ^n}$, $\mathrm{Q}_{XYZ} = \bPP{X^n|Z^n} \bPP{Y^nZ^n}$, and
$\eta = \eta_n = n^{-1/2}$, and choosing
\[
\la = nI(X\wedge Y \mid Z) - \sqrt{nV}Q^{-1}\left(\ep + \delta +\theta_n\right),
\]
where
\[
\theta_n = \frac 2 {\sqrt{n}} + \frac {T^3} {2V^{3/2}\sqrt{n}},
\]
we get by the Berry-Ess\'een theorem (\cf~\cite{Fel71, Shevtsova11}) that
\begin{align*}
\bP{}{ i(X^n,Y^n) - i(X^n, Z^n) \leq \la }  \ge \ep + \delta + \frac{2}{\sqrt{n}},
\end{align*}
which implies
\begin{align}
S_{\ep, \delta}(X^n, Y^n\mid Z^n) &\leq 
\lambda 
- \log \left( \bP {}{i(X^n,Y^n) - i(X^n, Z^n) \leq \la} 
- \ep -\delta - n^{-1/2} \right) + \log n
\nonumber
\\
&\leq nI(X\wedge Y \mid Z) - \sqrt{nV}Q^{-1}\left(\ep + \delta + \theta_n\right)
+ \frac{3}{2} \log n.
\label{e:nonasymptotic_upper_bound}
\end{align}
Thus, we  have the desired converse by using Taylor approximation of $Q(\cdot)$ to remove
$\theta_n$.

For the direct part, we use Corollary~\ref{corollary-time-sharing} by setting
\begin{align*}
\lambda_{\max} &= n (H(X|Y) + \Delta/2), \\
\lambda_{\min} &= n (H(X|Y) - \Delta/2), \\
\lambda &= n I(X \wedge Y \mid Z) - \sqrt{nV} Q^{-1}(\ep+\delta - \theta_n') - \Delta,
\end{align*}
and 
\begin{align}
\log|\cK| & = n I(X \wedge Y \mid Z) - \sqrt{nV} Q^{-1}(\ep+\delta - \theta_n') - \frac{11}{2}\log n -\Delta,
\label{e:nonasymptotic_lower_bound}
\end{align}
where $0<\Delta < 2H(X\mid Y)$, and 
\begin{align*}
\theta_n' = \frac{8 V_{X\mid Y}}{n \Delta^2} + \frac{T^3}{2V^{3/2} \sqrt{n}} + \frac{1}{n} + \frac{3}{2\sqrt{n}}.
\end{align*}
Note that $L=n$. Upon bounding the term $\bP{XY}{\cT_0}$ in \eqref{eq:condition-epsilon} and
\eqref{eq:condition-sum-epsilon-delta} by $\frac{4 V_{X|Y}}{n\Delta^2}$ using Chebyshev's inequality,
the condition \eqref{eq:condition-epsilon} is satisfied for sufficiently large $n$. Furthermore, upon
bounding the first term of \eqref{eq:condition-sum-epsilon-delta} by the Berry-Ess\'een theorem,
the condition \eqref{eq:condition-sum-epsilon-delta} is also satisfied. 
Thus, it follows that there exists an $(\ep,\delta)$-SK $K$ taking values on $\cK$.
The direct part follows by using Taylor approximation of $Q(\cdot)$ to remove $\theta_n'$. 
\qed
 

\begin{remark}\label{r:noninteractive_insufficient}
Note that  a standard noninteractive secret key agreement protocol based on information 
reconciliation and privacy amplification (cf.~\cite{RenWol05}) only gives the following 
suboptimal achievability bound on the second-order asymptotic term:
\begin{align*}
S_{\ep, \delta}\left(X^n, Y^n \mid Z^n \right) \ge n I(X \wedge Y \mid Z) - \sqrt{n V_{X|Y}} Q^{-1}(\ep) - \sqrt{n V_{X|Z}} Q^{-1}(\delta) 
+  o(\sqrt{n}),
\end{align*}
where $V_{X|Y}$ and 
$V_{X|Z}$ are the variances of the conditional 
log-likelihoods of $X$ given $Y$ and $Z$ respectively (cf.~\eqref{eq:variance-conditional-log-likelihood}).
\end{remark}

We close this section with a numerical example that illustrates the utility of our bounds
in characterizing the gap to secret key capacity at a fixed $n$.
\begin{example}[{{\bf Gap to secret key capacity}}]\label{e:gap_to_capacity}
For $\alpha_0, \alpha_1 \in (0,1/2)$, 
let $B_0$ and $B_1$ be independent random bits taking value $1$
with probability $\alpha_0$ and $\alpha_1$, respectively.
Consider binary $X, Y, Z$ where $Z$ is a uniform random bit independent
jointly of $B_0$ and $B_1$, $Y = Z\oplus B_0$, and $X = Y \oplus B_1$. We consider the rate 
$S_{\ep, \delta}(X^n, Y^n \mid Z^n)/ n$ of an $(\ep, \delta)$-SK that can be generated using
$n$ IID copies of $X$ and $Y$ when the eavesdropper observes $Z^n$. The following 
quantities, needed to evaluate \eqref{e:nonasymptotic_upper_bound}  and 
\eqref{e:nonasymptotic_lower_bound}, can be easily evaluated:
\begin{align*}
I(X\wedge Y \mid Z) &= h(\alpha_0*\alpha_1) - h(\alpha_1),
\quad
V = \mu_2,
\quad
T = \mu_3,
\\
V_{X|Y} &= \alpha_1 (\log \alpha_1 - h(\alpha_1))^2 + (1- \alpha_1)(\log (1-\alpha_1) - h(\alpha_1))^2,
\end{align*}
where $\mu_r$ is the $r$th central moment of $i(X,Y \mid Z)$ and is given by
\begin{align*}
\mu_r &=  
\alpha_0\alpha_1\left|\log \frac{\alpha_1}{1 - \alpha_0*\alpha_1} - I(X\wedge Y \mid Z)\right|^r
+ (1-\alpha_0)(1-\alpha_1)
\left|\log \frac{1-\alpha_1}{1 - \alpha_0*\alpha_1} -I(X\wedge Y \mid Z)\right|^r
\\
&\quad + 
(1-\alpha_0)\alpha_1  \left|\log \frac{\alpha_1}{ \alpha_0*\alpha_1} - I(X\wedge Y \mid Z)\right|^r
+ \alpha_0(1-\alpha_1) 
\left|\log \frac{1 - \alpha_1}{ \alpha_0*\alpha_1} - I(X\wedge Y \mid Z)\right|^r,
\end{align*}
$h(x) = - x\log x  - (1-x)\log (1-x)$ is the binary entropy function and 
$\alpha_0*\alpha_1 = \alpha_0(1- \alpha_1)+ (1-\alpha_0)\alpha_1$. 
In Figure~\ref{f:gap_to_capacity} (given in Section~\ref{s:introduction}), we 
plot the upper bound on $S_{\ep, \delta}(X^n, Y^n \mid Z^n)/ n$ resulting 
from~\eqref{e:nonasymptotic_upper_bound} and the lower bound resulting 
from~\eqref{e:nonasymptotic_upper_bound} with $\Delta = 1$ for 
$\alpha_0 =0.25$ and $\alpha_1 =0.125$.
\end{example}

\section{Discussion: Is interaction necessary?}\label{s:discussion}
In contrast to the protocols in \cite{Mau93, AhlCsi93, CsiNar04, RenWol05}, our proposed Protocol~\ref{prot:high_reliability}
for secret key agreement is interactive. In fact, the protocol requires as many rounds of 
interaction as the number of slices $L$, which can be pretty large in general.
For instance, to obtain the second-order asymptotic term in the previous
section, we chose $L =n$. In Appendix~\ref{appendix:high_secrecy_protocol}, we present an
alternative protocol which requires only $1$-bit of feedback and, in the special case when $Z$ is constant, 
achieves the asymptotic results of Sections~\ref{s:general_sources} and~\ref{s:second_order}.
But is interaction necessary for attaining our asymptotic results? Below we present an example 
where none of the known (noninteractive) secret key agreement 
protocols achieves the general capacity of Theorem~\ref{t:secret key_capacity}, 
suggesting that perhaps interaction is necessary. 

For $i=1,2$, let $(X_i^n,Y_i^n,Z_i^n)$ be IID with $\cX = \cY = \cZ = \{0,1\}$ such that 
\begin{align*}
\bP{X_i^nY_i^nZ_i^n}{x^n, y^n,z^n} = \frac{1}{2^n} W_i^n(y^n|x^n) V^n(z^n|y^n),
\end{align*}
where $W_i$ and $V$, respectively, are binary symmetric channels with crossover probabilities 
$p_i$ and $q$. Let $(X_n,Y_n,Z_n)$ be the mixed source given by
\begin{align*}
\bP{X_n Y_n Z_n}{x^n,y^n,z^n} &= \frac{1}{2} \bP{X_1^nY_1^nZ_1^n}{x^n,y^n,z^n} 
 + \frac{1}{2} \bP{X_2^nY_2^nZ_2^n}{x^n,y^n,z^n} \\
&= \frac{1}{2^n} \left[ \frac{1}{2} W_1^n(y^n|x^n) + \frac{1}{2} W_2^n(y^n|x^n) \right] V^n(z^n|y^n).
\end{align*} 
Note that $X_n \mc Y_n \mc Z_n$ forms a Markov chain. Suppose that $0<p_1 < p_2<\frac{1}{2}$. Then, we have
\begin{align*}
\underline{I}(\bX \wedge \bY\mid \bZ) 
&= \min[H(X_1|Z_1) - H(X_1|Y_1), H(X_2|Z_2) - H(X_2|Y_2)] \\
&= \min[ h(p_1 * q) - h(p_1), h(p_2 * q) - h(p_2)] \\
&= h(p_2*q) - h(p_2),
\end{align*}
where $h(\cdot)$ is the binary entropy function and $*$ is binary convolution. 
Using a standard noninteractive secret key agreement protocol based on information 
reconciliation and privacy amplification (cf.~\cite{RenWol05}), we can achieve
only
\begin{align*}
& \underline{H}(\bX\mid \bZ) - \overline{H}(\bX \mid \bY) \\
&= \min[H(X_1|Z_1), H(X_2|Z_2)] - \max[H(X_1|Y_1), H(X_2|Y_2)] \\
&= H(X_1|Z_1) - H(X_2|Y_2) \\
&= h(p_1 * q) - h(p_2),
\end{align*}
which is less than the general secret key capacity of 
Theorem~\ref{t:secret key_capacity}. Proving a precise 
limitation result for noninteractive protocols is a 
direction for future research.
\appendix

\subsection{Proof of Theorem \ref{theorem:one-shot-converse-source-model}}
\label{appendix:theorem:one-shot-converse-source-model}

The definition of a secret key used in \cite{TyaWat14, TyaWat14ii}
is different from the one in Definition~\ref{d:secret_key}, and it conveniently combines the secrecy
and the reliability requirements into a single expression. Instead of
considering a separate RV $K$, the alternative definition
directly works with the estimates $K_x$ and $K_y$. Specifically,
let $K_x$ and $K_y$ be functions of $(U_x, X, \bF)$ and 
$(U_y, Y, \bF)$, respectively, where $\bF$ is an interactive communication.
Then, RVs $K_x$ and $K_y$ with a common range $\cK$ constitute an $\ep${\em-secret key}
($\ep$-SK) if 
\begin{eqnarray}
\ttlvrn{\bPP{K_xK_y\bF Z}}{\mathrm{P}_{\mathtt{unif}}^\two \times \bPP{\bF Z}} &\leq \ep,
\label{e:security}
\end{eqnarray}
where, for a pmf $\bPP {}$ on $\cX$, $\bPP {}^\m$ denotes its extension to
$\cX^m$ given by  
\[
\bPP {}^\m(x_1, ..., x_m) = \bP{}{x}\indicator(x_1 = ... =x_m), \quad (x_1, ..., x_m)\in \cX^m.
\]
Note that the alternative definition captures reliability condition $\bPr{K_x = K_y} \geq 1-\ep$
by requiring that the joint distribution $\bPP{K_xK_y}$ is close to a uniform distribution
on the diagonal of $\cK \times \cK$. The upper bound in \cite{TyaWat14, TyaWat14ii} holds
under this alternative definition of a secret key. However, the next lemma says that
this alternative definition is closely related to our Definition~\ref{d:secret_key}.

\begin{lemma}\label{l:secret key_combined}
Given $\ep, \delta \in [0,1)$ and an $(\ep, \delta)$-SK $K$, the local
estimates $K_x$ and $K_y$ satisfy \eqref{e:security} with
$\ep+\delta$, \ie, 
\begin{align}
\ttlvrn {\bPP {K_x K_y\bF Z}}{\mathrm{P}_\mathtt{unif}^\two \times \bPP {\bF Z}} \le \ep +\delta.
\nonumber
\end{align}
Conversely, if $K_x$ and $K_y$ satisfy \eqref{e:security},
either $K_x$ or $K_y$ constitutes an $(\ep, \ep)$-SK.
\end{lemma}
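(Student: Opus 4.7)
The lemma has two directions, and both hinge on the elementary observation that the reference distribution $\bPP{\mathtt{unif}}^{(2)}\times\bPP{\bF Z}$ is concentrated on the set where $K_x=K_y$ with uniform and independent-from-$(\bF,Z)$ common value. Each direction should follow from one triangle inequality plus a coupling argument.

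For the forward direction, I would introduce an auxiliary distribution $\tilde{\mathrm P}$ on $(\cK\times\cK)\times\cF\times\cZ$ defined by $\tilde{\mathrm P}(k,k',f,z) := \bP{K\bF Z}{k,f,z}\,\indicator(k=k')$, i.e.\ the distribution obtained if the two local estimates both coincided with the ``ideal'' key $K$. Then I would split via the triangle inequality
\begin{align*}
\ttlvrn{\bPP{K_xK_y\bF Z}}{\bPP{\mathtt{unif}}^{(2)}\bPP{\bF Z}}
&\le \ttlvrn{\bPP{K_xK_y\bF Z}}{\tilde{\mathrm P}}
+ \ttlvrn{\tilde{\mathrm P}}{\bPP{\mathtt{unif}}^{(2)}\bPP{\bF Z}}.
\end{align*}
The first term is bounded by $\ep$ using the natural coupling derived from the joint law $\bPP{KK_xK_y\bF Z}$: set $(K_x',K_y',\bF',Z')=(K,K,\bF,Z)$, which has marginal $\tilde{\mathrm P}$, and observe that on the event $\{K=K_x=K_y\}$ (of probability at least $1-\ep$ by Definition~\ref{d:secret_key}) it agrees with $(K_x,K_y,\bF,Z)$; the coupling inequality does the rest. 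The second term collapses because $\tilde{\mathrm P}$ and $\bPP{\mathtt{unif}}^{(2)}\bPP{\bF Z}$ are both supported on the diagonal $\{k=k'\}$, so their total variation reduces to $\ttlvrn{\bPP{K\bF Z}}{\bPP{\mathtt{unif}}\bPP{\bF Z}}\le\delta$ by the secrecy clause of Definition~\ref{d:secret_key}. Summing gives $\ep+\delta$.

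For the converse, suppose $(K_x,K_y)$ satisfy \eqref{e:security} with parameter $\ep$. I would take $K:=K_x$ (an identical argument works for $K_y$). Reliability follows from the fact that the reference distribution puts zero mass off the diagonal, so by the standard total-variation/event bound
\begin{align*}
\bPr{K_x\ne K_y} \le \ttlvrn{\bPP{K_xK_y\bF Z}}{\bPP{\mathtt{unif}}^{(2)}\bPP{\bF Z}} \le \ep,
\end{align*}
which gives $\bPr{K_x=K_y=K}\ge1-\ep$. Secrecy follows from the data-processing (marginalization) inequality for total variation: marginalizing both distributions over $K_y$ gives $\ttlvrn{\bPP{K_x\bF Z}}{\bPP{\mathtt{unif}}\bPP{\bF Z}}\le\ep$. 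Hence $K_x$ constitutes an $(\ep,\ep)$-SK.

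Neither direction presents a real obstacle; the only subtle point is making sure the auxiliary coupling in the forward direction is constructed on the full joint space of $(K,K_x,K_y,\bF,Z)$ so that the marginals used in the two triangle terms are consistent, and verifying that the second triangle term genuinely equals $\ttlvrn{\bPP{K\bF Z}}{\bPP{\mathtt{unif}}\bPP{\bF Z}}$ because both measures being compared live on the diagonal.
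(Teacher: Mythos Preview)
Your proposal is correct and follows essentially the same route as the paper. The only cosmetic difference in the forward direction is that the paper first lifts to the three-variable space $(K,K_x,K_y,\bF,Z)$ via monotonicity and then applies the triangle inequality with intermediate distribution $\bPP{K|\bF Z}^{(3)}\times\bPP{\bF Z}$, whereas you stay in the two-variable space $(K_x,K_y,\bF,Z)$ and use the coupling $(K_x',K_y')=(K,K)$ directly; marginalizing the paper's intermediate over the first coordinate yields exactly your $\tilde{\mathrm P}$, so the two arguments are the same computation in different packaging. The converse is identical to the paper's.
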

{\it Proof.} We prove the direct part first. For an $(\ep, \delta)$-SK $K$, 
\begin{align*}
&\espc\ttlvrn {\bPP {K_x K_y\bF Z}}{\mathrm{P}_\mathtt{unif}^\two \times \bPP {\bF Z}}
\\
&\leq \ttlvrn {\bPP {KK_x K_y\bF Z}}{\mathrm{P}_\mathtt{unif}^\three \times \bPP {\bF Z}}
\\
& \leq \ttlvrn {\bPP {KK_x K_y\bF Z}} {\mathrm{P}_{K | \bF Z}^\three\times\bPP{\bF Z}}
 + \ttlvrn  {\mathrm{P}_{K| \bF Z}^\three\times \bPP{\bF Z}}{\mathrm{P}_\mathtt{unif}^\three \times \bPP {\bF Z}}.
\end{align*}
Since 
\[
\ttlvrn \dP \dQ = \dQ(\{x: \dQ(x) \ge \dP(x)\}) - \dP(\{x: \dQ(x) \ge \dP(x)\})  
\]
and 
\begin{align}
& \{ (k,k_x,k_y,f,z) : \mathrm{P}_{K|\bF Z}^\three(k,k_x,k_y|f,z) 
 \ge \mathrm{P}_{KK_xK_y|\bF Z}(k,k_x,k_y|f,z) \}  
\nonumber
\\
 &= \{ (k,k_x,k_y,\mathbf{f},z) : k = k_x = k_y \}, 
\nonumber
\end{align}
the first term on the right-side above satisfies
\begin{align}
 \ttlvrn {\bPP {KK_x K_y\bF Z}} {\mathrm{P}_{K | \bF Z}^\three\times \bPP{\bF Z}} = 1 - \bPr {K=K_x =K_y}
\le \ep.
\label{e:secret key_tv_to_prob} 
\end{align}
Furthermore, the second term satisfies
\begin{align*}
&\ttlvrn  {\mathrm{P}_{K| \bF Z}^\three\times \bPP{\bF Z}}{\mathrm{P}_\mathtt{unif}^\three \times \bPP {\bF Z}}
\\
&= \sum_{k, k_x, k_y, f, z}\bP{\bF Z}{f, z}
\left|\bP{K| \bF Z}{k|f,z}\indicator(k=k_x=k_y) 
- \indicator(k=k_x= k_y)\frac 1{|\cK|}\right|
\\
&=\ttlvrn{\bPP {K| \bF Z} \times \bPP{\bF Z}}{\mathrm{P}_\mathtt{unif} \times \bPP {\bF Z}}
\\
&\le \delta,
\end{align*}
where the last inequality is by the $\delta$-secrecy condition $K$.
Combining the bounds on the two terms above, the direct part follows.

For the converse, $\ep$-secrecy of $K_x$ (or $K_y$) holds since by the monotonicity of the variational distance
\[
\ttlvrn{\bPP{K_x\bF Z}}{\bPP{\mathtt{unif}}\times \bPP{\bF Z}}
\leq
\ttlvrn{\bPP{K_xK_y\bF Z}}{\mathrm{P}_{\mathtt{unif}}^\two \times \bPP{\bF Z}} \leq \ep.
\]
The $\ep$-reliability condition, too, follows from the triangle inequality upon
observing that
\begin{align*}
\ttlvrn{\bPP{K_xK_y}}{\bPP{\mathtt{unif}}^\two}
&= \sum_{k_x, k_y}\left|\bP {K_xK_y}{k_x, k_y}
- \indicator(k_x = k_y)\frac 1{|\cK|} 
\right|
\\
&\geq \sum_{k_x \neq k_y}\bP {K_xK_y}{k_x, k_y}
\\
&= \bPr{K_x\neq K_y}.
\end{align*}
\qed

To prove Theorem \ref{theorem:one-shot-converse-source-model}, we first relate 
the length of a secret key satisfying \eqref{e:security}
to the exponent of the probability of error of type II in a binary hypothesis 
testing problem where an observer of $(K_x,K_y,\bF,Z)$ seeks to find out
if the underlying distribution was $\bPP{X Y Z}$ of $\bQQ{XYZ} = \bQQ{X|Z} \bQQ{Y|Z} \bQQ{Z}$.
This result is stated next.

\begin{lemma} \label{lemma:relation-SK-HT}
For an $\ep$-SK $(K_x,K_y)$ satisfying \eqref{e:security}
generated by an interactive communication $\bF$, let
$W_{K_x K_y \bF|X Y Z}$ denote the resulting conditional distribution 
on $(K_x,K_y,\bF)$ given $(X,Y,Z)$.
Then, for every $0 < \eta < 1 - \ep$ and every $\bQQ{XYZ} = \bQQ{X|Z} \bQQ{Y|Z} \bQQ{Z}$, we have
\begin{align} \label{eq:relation-SK-HT}
\log |\cK | \le - \log \beta_{\ep+\eta}(\bPP{K_xK_y\bF Z}, \bQQ{K_xK_y\bF Z}) + 2 \log(1/\eta),
\end{align}
where $\bQQ{K_x K_y \bF Z}$ is the marginal of $(K_x,K_y,\bF,Z)$ of the joint distribution
\begin{align*}
\bQQ{K_xK_y \bF X Y Z} = \bQQ{XYZ} W_{K_x K_y \bF|XYZ}.
\end{align*}
\end{lemma}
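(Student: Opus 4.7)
The plan is to construct an explicit test $\dT$ on $(K_x, K_y, \bF, Z)$ satisfying $\bPP{K_xK_y\bF Z}[\dT]\geq 1 - \ep - \eta$ and $\bQQ{K_xK_y\bF Z}[\dT] \leq 1/(|\cK|\eta^2)$; the bound \eqref{eq:relation-SK-HT} then follows immediately from the definition of $\beta_{\ep+\eta}$ after taking the negative logarithm. The key structural ingredient that makes the $\bQQ{}$-side bound tractable is that under the product-form alternative $\bQQ{XYZ} = \bQQ{X|Z}\bQQ{Y|Z}\bQQ{Z}$, the induced joint distribution of $(K_x, K_y)$ factors conditionally on $(\bF, Z)$, i.e., $K_x$ and $K_y$ are conditionally independent given $(\bF, Z)$. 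I will establish this by induction on the number of rounds of interaction: starting from $X$ and $Y$ conditionally independent given $Z$, each new message is a function of just one party's observation, its private randomness, and the transcript so far, and a direct factorization shows conditional independence is preserved after conditioning on it. Since $K_x$ and $K_y$ are then local functions of $(X, \bF, U_x)$ and $(Y, \bF, U_y)$, with $U_x$ and $U_y$ mutually independent and independent of $(X, Y, Z)$, the factorization propagates to $K_x$ and $K_y$.

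With this in hand, I will use the test
\[
\dT(k_x, k_y, f, z) = \indicator(k_x = k_y)\cdot \indicator\left(\bQ{K_x\mid \bF Z}{k_x\mid f, z} \leq \frac{1}{|\cK|\eta^2}\right).
\]
Under $\bQQ{K_xK_y\bF Z}$, the diagonal weight factorizes as $\bQ{K_x\mid\bF Z}{k\mid f, z}\,\bQ{K_y\mid\bF Z}{k\mid f, z}$; replacing the first factor by its threshold $1/(|\cK|\eta^2)$ and then summing $\bQ{K_y\mid \bF Z}{k\mid f, z}$ over $k$ against $\bQ{\bF Z}{f,z}$ yields $\bQQ{}[\dT] \leq 1/(|\cK|\eta^2)$. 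For the $\bPP{}$-side, I will invoke the triangle inequality together with the security condition \eqref{e:security} to replace $\bPP{K_xK_y\bF Z}$ with the idealized $\bPP{\mathtt{unif}}^\two \times \bPP{\bF Z}$ at a TV cost of $\ep$. Under this idealized distribution the first indicator is automatic (its support is the diagonal), and since $K_x$ is uniform on $\cK$ one readily checks that $\bE{\bPP{\mathtt{unif}}^\two \times \bPP{\bF Z}}{\bQ{K_x\mid\bF Z}{K_x\mid \bF, Z}} = 1/|\cK|$. A single application of Markov's inequality then bounds the probability that the second indicator fails by $(1/|\cK|)/(1/(|\cK|\eta^2)) = \eta^2 \leq \eta$, giving $\bPP{}[\dT] \geq 1 - \ep - \eta$ as required.

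The main technical obstacle is the inductive preservation of conditional independence under interactive communication; the rest of the argument is a clean combination of a thresholding test with Markov's inequality. The inductive step must be carried out with care since each message may depend non-trivially on one party's observation, and the subsequent extension from $X \perp Y \mid (\bF, Z)$ to $K_x \perp K_y \mid (\bF, Z)$ relies crucially on the local randomness variables $U_x$ and $U_y$ being mutually independent and independent of $(X, Y, Z)$.
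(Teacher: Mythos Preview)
Your proof is correct and follows the same overall architecture as the paper's---construct an explicit test, bound its type-II error under $\bQQ{}$ by $1/(|\cK|\eta^2)$, and bound its type-I error under $\bPP{}$ by $\ep+\eta$ via the security condition---but the concrete test and the analytic tools differ in an instructive way.

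The paper uses the acceptance region $\cA = \{(k_x,k_y,f,z): \bPP{\mathtt{unif}}^\two(k_x,k_y)/\bQ{K_xK_y\mid\bF Z}{k_x,k_y\mid f,z} \geq |\cK|\eta^2\}$, defined in terms of the \emph{joint} conditional $\bQQ{K_xK_y\mid\bF Z}$. Its type-II bound is a one-line change of measure (no factorization needed), while its type-I bound is where the interactive-communication factorization $\bQQ{K_xK_y\mid\bF Z}=\bQQ{K_x\mid\bF Z}\bQQ{K_y\mid\bF Z}$ enters: after passing to the ideal $\bPP{\mathtt{unif}}^\two\times\bPP{\bF Z}$, the paper bounds the indicator by its square root and applies Cauchy--Schwarz to the resulting sum $\sum_k \bQ{K_x\mid\bF Z}{k\mid f,z}^{1/2}\bQ{K_y\mid\bF Z}{k\mid f,z}^{1/2}\leq 1$. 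You instead threshold the \emph{marginal} $\bQ{K_x\mid\bF Z}{k\mid f,z}$, which shifts the use of the factorization to the type-II side (replacing one factor by its threshold and summing the other), and then handles type~I with a single Markov inequality on $\bE{}{\bQ{K_x\mid\bF Z}{K_x\mid\bF,Z}}=1/|\cK|$. Your route avoids the indicator-to-square-root trick and Cauchy--Schwarz, and in fact yields the slightly sharper slack $\eta^2$ rather than $\eta$ on the type-I side (which you then relax to $\eta$). The induction you outline for preserving conditional independence under interaction is exactly the content of the paper's Lemma~\ref{lemma:property-interactive-communication}, which the paper invokes without proof.
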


To prove Lemma \ref{lemma:relation-SK-HT}, we need the following basic property of interactive communication ($cf.$~\cite{TyaNar13ii}).
\begin{lemma}[Interactive communication property] \label{lemma:property-interactive-communication}
Given $\bQQ{XYZ} = \bQQ{X|Z} \bQQ{Y|Z} \bQQ{Z}$ and an interactive communication $\bF$, the following holds:
\begin{align*}
\bQQ{X Y|\bF Z} = \bQQ{X|\bF Z} \times \bQQ{Y| \bF Z},
\end{align*}
i.e., conditionally independent observations remain so when conditioned additionally on an interactive communication.
\end{lemma}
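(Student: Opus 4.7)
The plan is to prove Lemma~\ref{lemma:property-interactive-communication} by induction on the number of rounds of the interactive communication. Writing $\bF = (F_1, \ldots, F_r)$ and $\bF^i = (F_1, \ldots, F_i)$, I will establish
\[
\bQQ{XY \mid \bF^i Z} = \bQQ{X \mid \bF^i Z} \cdot \bQQ{Y \mid \bF^i Z}
\]
for every $0 \le i \le r$. The base case $i=0$ is exactly the hypothesis $\bQQ{XYZ} = \bQQ{X \mid Z} \bQQ{Y \mid Z} \bQQ{Z}$ on the alternative distribution $\bQQ{}$. For the inductive step, I would use the structural fact about interactive communication: each $F_i$ is produced by one party and is therefore a (possibly randomized) function of only that party's observation, its local randomness, and the past communication $\bF^{i-1}$. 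Say, without loss of generality, that round $i$ is sent by the first party, so that $F_i = \phi_i(X, U_x, \bF^{i-1})$ with local randomness $U_x$ independent of $(X,Y,Z,U_y)$.

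Next I would compute, using Bayes' rule,
\[
\bQQ{XY \mid \bF^i Z}(x,y \mid f^i, z)
= \frac{\bQQ{XY \mid \bF^{i-1} Z}(x,y \mid f^{i-1}, z) \cdot \bQQ{F_i \mid XY \bF^{i-1} Z}(f_i \mid x,y,f^{i-1},z)}
{\bQQ{F_i \mid \bF^{i-1} Z}(f_i \mid f^{i-1}, z)}.
\]
The crucial observation is that the conditional law of $F_i$ given $(X, Y, \bF^{i-1}, Z)$ under $\bQQ{}$ coincides with that given $(X, \bF^{i-1})$ alone, since $F_i$ is generated from $X$, $\bF^{i-1}$, and local randomness independent of $(Y, Z)$. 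Thus the numerator factors into a function of $(x, f^{i-1}, f_i)$ times a function of $(y, f^{i-1}, z)$ once we apply the inductive hypothesis $\bQQ{XY \mid \bF^{i-1} Z}(x,y \mid f^{i-1},z) = \bQQ{X \mid \bF^{i-1} Z}(x \mid f^{i-1},z) \bQQ{Y \mid \bF^{i-1} Z}(y \mid f^{i-1},z)$. A joint conditional distribution that factors as a product of a function of $x$ and a function of $y$ (with shared conditioning variables) must be a product of the respective marginals, giving the desired factorization at step $i$. The case where round $i$ is sent by the second party is symmetric.

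The main technical obstacle is simply bookkeeping: keeping the conditioning structure straight and justifying the step that removes $(Y, Z)$ from the conditioning set of $F_i$. This is where the independence of local randomness $U_x$ from $(Y, Z, U_y)$ must be invoked explicitly; without this assumption the message could, in principle, depend on $(Y, Z)$ and the induction would fail. Once this is handled rigorously for one round, the induction propagates cleanly through all $r$ rounds, and specializing to $i = r$ yields the claim.
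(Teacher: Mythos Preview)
Your inductive argument is correct and is the standard proof of this fact; the paper itself does not supply a proof but only cites \cite{TyaNar13ii} for it. One minor clarification worth making explicit: the conditional law $\bQQ{F_i\mid XY\bF^{i-1}Z}$ equals the protocol channel $W_{F_i\mid X\bF^{i-1}}$ because the joint law under $\bQQ{}$ is defined as $\bQQ{XYZ}\cdot W_{\bF\mid XYZ}$ with the \emph{same} fixed channel $W$ as under $\bPP{}$, and this channel does not depend on $Z$ at all (the parties never observe $Z$); hence the independence of the local randomness from $(Y,Z)$ that you invoke is inherited automatically and does not require a separate assumption about $\bQQ{}$.
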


\paragraph*{Proof of Lemma \ref{lemma:relation-SK-HT}}
We establish \eqref{eq:relation-SK-HT} by constructing a test for the hypothesis testing problem with
null hypothesis $\bPP{} = \bPP{K_x K_y \bF Z}$ and alternative hypothesis $\bQQ{} = \bQQ{K_x K_y \bF Z}$.
Specifically, we use a deterministic test with the following acceptance region (for the null hypothesis)\footnote{The
values $(k_x,k_y,f,z)$ with $\bQ{K_x K_y|\bF Z}{k_x,k_y|f,z} = 0$ are included in $\cA$.}:
\begin{align*}
\cA := \left\{ (k_x,k_y,f,z) : \log \frac{\mathrm{P}_{\mathtt{unif}}^\two(k_x,k_y)}{\bQ{K_x K_y|\bF Z}{k_x,k_y|f,z}} \ge \lambda \right\},
\end{align*}
where 
\begin{align*}
\lambda = \log |\cK | - 2 \log(1/\eta).
\end{align*}
For this test, the probability of type II is bounded above as 
\begin{align}
\bQ{K_x K_y \bF Z}{\cA} 
 &= \sum_{f,z} \bQ{\bF Z}{f,z} \sum_{k_x, k_y : \atop (k_x,k_y,f,z) \in \cA} \bQ{K_xK_y|\bF Z}{k_x,k_y|f,z} \nonumber \\
 &\le 2^{-\lambda}  \sum_{f,z} \bQ{\bF Z}{f,z} \sum_{k_x,k_y} \mathrm{P}_{\mathtt{unif}}^\two(k_x,k_y) \nonumber \\
 &= \frac{1}{|\cK| \eta^2}. \label{eq:type-two-bound}
\end{align}
On the other hand, the probability of error of type I is bounded above as 
\begin{align}
\bP{K_x K_y \bF Z}{\cA^c}
&\le \ttlvrn{\bPP{K_x K_y \bF Z}}{ \mathrm{P}_{\mathtt{unif}}^\two \times \bPP{\bF Z}} + 
 \mathrm{P}_{\mathtt{unif}}^\two \times \bPP{\bF Z}(\cA^c) \nonumber \\
 &\le \ep +  \mathrm{P}_{\mathtt{unif}}^\two \times \bPP{\bF Z}(\cA^c), 
  \label{eq:type-one-bound-1}
\end{align}
where the first inequality follows from the definition of variational distance, and the second is a 
consequence of the security criterion \eqref{e:security}.
The second term above can be expressed as follows:
\begin{align}
 \mathrm{P}_{\mathtt{unif}}^\two \times \bPP{\bF Z}(\cA^c)
 &= \sum_{f,z} \bP{\bF Z}{f,z} \frac{1}{|\cK|} \sum_k \indicator\left( (k,k,f,z) \in \cA^c \right) \nonumber \\
 &= \sum_{f,z} \bP{\bF Z}{f,z} \frac{1}{|\cK|} \sum_k \indicator\left( \bQ{K_x K_y|\bF Z}{k,k,|f,z} |\cK|^2 \eta^2 > 1 \right). 
  \label{eq:type-one-bound-2}
\end{align}
The inner sum can be further upper bounded as 
\begin{align}
\sum_k \indicator\left( \bQ{K_x K_y|\bF Z}{k,k,|f,z} |\cK|^2 \eta^2 > 1 \right)
&\le \sum_k \left( \bQ{K_x K_y|\bF Z}{k,k,|f,z} |\cK|^2 \eta^2 \right)^{\frac{1}{2}} \nonumber \\
&= |\cK| \eta \sum_k \bQ{K_x K_y|\bF Z}{k,k,|f,z}^{\frac{1}{2}} \nonumber \\
&= |\cK| \eta \sum_k \bQ{K_x| \bF Z}{k|f,z}^{\frac{1}{2}} \bQ{K_y|\bF Z}{k|f,z}^{\frac{1}{2}}, 
 \label{eq:type-one-bound-3}
\end{align}
where the previous equality uses Lemma \ref{lemma:property-interactive-communication} and the fact that 
given $\bF$, $K_x$ and $K_y$ are functions of $(X,U_x)$ and $(Y,U_y)$, respectively. 
Next, an application of the Cauchy-Schwartz inequality to the sum on the right-side of \eqref{eq:type-one-bound-3} yields
\begin{align}
\sum_k \bQ{K_x| \bF Z}{k|f,z}^{\frac{1}{2}} \bQ{K_y|\bF Z}{k|f,z}^{\frac{1}{2}}
&\le \left( \sum_{k_x}\bQ{K_x| \bF Z}{k_x|f,z}  \right)^{\frac{1}{2}} \left( \sum_{k_y} \bQ{K_y|\bF Z}{k_y|f,z} \right)^{\frac{1}{2}} \nonumber \\
&= 1. \label{eq:type-one-bound-4}
\end{align}
Upon combining \eqref{eq:type-one-bound-2}-\eqref{eq:type-one-bound-4}, we obtain
\begin{align*}
\mathrm{P}_{\mathtt{unif}}^\two \times \bPP{\bF Z}(\cA^c) \le \eta,
\end{align*}
which along with \eqref{eq:type-one-bound-1} gives 
\begin{align}
\bP{K_x K_y \bF Z}{\cA^c} \le \ep +  \eta. 
 \label{eq:type-one-bound-final}
\end{align}
It follows from \eqref{eq:type-one-bound-final} and \eqref{eq:type-two-bound} that 
\begin{align*}
\beta_{\ep+\eta}(\bPP{K_xK_y \bF Z}, \bQQ{K_x K_y \bF Z}) \le \frac{1}{|\cK|\eta^2},
\end{align*}
which completes the proof. \qed

Finally, we derive the upper bound for $\sleng$ using the
data processing property of $\beta_\ep$: let $W$ be a 
stochastic mapping from $\cV$ to $\cV^\prime$, i.e., for each $v \in \cV$,
$W(\cdot | v)$ is a distribution on $\cV^\prime$. Then, since the map $W$ followed by a test on $\cV^\prime$
can be regarded as a stochastic test on $\cV$,
\begin{align} \label{eq:data-processing-inequality}
\beta_\ep(\dP,\dQ) \le \beta_\ep(\dP \circ W, \dQ \circ W),
\end{align}
where $(\dP \circ W)(v^\prime) = \sum_v \bP{}{v} W(v^\prime|v)$.

\paragraph*{Proof of Theorem \ref{theorem:one-shot-converse-source-model}}
Using the data processing inequality \eqref{eq:data-processing-inequality} with
$\dP = \bPP{X Y Z}$, $\dQ = \bQQ{XYZ}$, and $W = W_{K_x K_y\bF | XYZ}$, 
Lemma \ref{lemma:relation-SK-HT} implies that any $(K_x,K_y)$ 
satisfying the secrecy criterion \eqref{e:security} must satisfy
\begin{align}
\log |\cK | \le - \log \beta_{\ep+\eta}(\bPP{X Y Z}, \bQQ{X Y  Z}) + 2 \log(1/\eta). 
\label{eq:bound-after-dpi}
\end{align}
Furthermore, from Lemma \ref{l:secret key_combined}, $(\ep,\delta)$-SK implies existence of local estimates
$K_x$ and $K_y$ satisfying \eqref{e:security} with $(\ep+\delta)$ in place of $\ep$. 
Thus, an $(\ep,\delta)$-SK with range $\cK$ 
must satisfy \eqref{eq:bound-after-dpi} with $\ep$ replaced by $(\ep+\delta)$, which 
completes the proof. \qed

\subsection{Proof of Lemma \ref{l:leftover_hash}}
\label{appendix:l:leftover_hash}

Let $K_s = f_s(X)$ be the key for a fixed seed.
By using the Cauchy-Schwarz inequality,
\begin{align*}
\ttlvrn{\bPP{K_s VZ}}{\bPP{\mathtt{unif}}\bPP{VZ}}
&= \frac{1}{2} \sum_{k,v,z} \left| \bP{K_sVZ}{k,v,z} - \frac{1}{|\cK|} \bP{VZ}{v,z} \right| \\
&= \frac{1}{2} \sum_{k,v,z} \sqrt{\bQ{Z}{z}} \left| \frac{\bP{K_sVZ}{k,v,z} - \frac{1}{|\cK|} \bP{VZ}{v,z}}{\sqrt{\bQ{Z}{z}}} \right| \\
&\le \frac{1}{2} \sqrt{ |\cK| |\cV| \sum_{k,v,z} \frac{\left( \bP{K_sVZ}{k,v,z} - \frac{1}{|\cK|} \bP{VZ}{v,z} \right)^2 }{\bQ{Z}{z}}}.
\end{align*}
Thus, by the concavity of $\sqrt{\cdot}$, 
\begin{align*}
\ttlvrn{\bPP{K VZS}}{\bPP{\mathtt{unif}}\bPP{VZ}\bPP{S}} 
\le \frac{1}{2} \sqrt{ |\cK| |\cV| \sum_{k,v,z,s} \bP{S}{s} \frac{\left( \bP{K_sVZ}{k,v,z} - \frac{1}{|\cK|} \bP{VZ}{v,z} \right)^2 }{\bQ{Z}{z}}}.
\end{align*}
The numerator of the sum can be rewritten as 
\begin{align*}
& \sum_{k,s} \bP{S}{s} \left( \bP{K_s VZ}{k,v,z} - \frac{1}{|\cK|} \bP{VZ}{v,z} \right)^2 \\
&= \sum_s \bP{S}{s} \sum_k \bigg[ \bP{K_s VZ}{k,v,z}^2 - 2 \bP{K_s VZ}{k,v,z} \frac{1}{|\cK|} \bP{VZ}{v,z} 
  + \frac{1}{|\cK|^2}  \bP{VZ}{v,z}^2 \bigg] \\
&= \sum_s \bP{S}{s} \bigg[ \sum_k \bP{K_s VZ}{k,v,z}^2 - \frac{1}{|\cK|} \bP{VZ}{v,z}^2 \bigg] \\
&= \sum_s \bP{S}{s} \bigg[ \sum_{x,x^\prime} \bP{XVZ}{x,v,z}\bP{XVZ}{x^\prime,v,z} 
     \left\{ \indicator\left( f_s(x) = f_s(x^\prime) \right) - \frac{1}{|\cK|} \right\} \bigg] \\
&= \sum_x \bP{XVZ}{x,v,z}^2 \sum_s \bP{S}{s} \left\{ 1 - \frac{1}{|\cK|} \right\} \\
&~~~  + \sum_{x \neq x^\prime} \bP{XVZ}{x,z,v}\bP{XVZ}{x^\prime,v,z} \sum_s \bP{S}{s}
   \left\{ \indicator\left( f_s(x) = f_s(x^\prime) \right) - \frac{1}{|\cK|} \right\} \\
&\le \sum_x \bP{XVZ}{x,v,z}^2,
\end{align*}
where we used the property of two-universality \eqref{eq:property-two-universal} in the last inequality.
Thus, we have
\begin{align*}
\ttlvrn{\bPP{K VZS}}{\bPP{\mathtt{unif}}\bPP{VZ}\bPP{S}} 
&\le \frac{1}{2} \sqrt{ |\cK||\cV| \sum_{x,v,z} \frac{\bP{XVZ}{x,v,z}^2}{\bQ{Z}{z}}} \\
&\le \frac{1}{2} \sqrt{ |\cK||\cV| \sum_{x,v,z} \frac{\bP{XVZ}{x,v,z}\bP{XZ}{x,z}}{\bQ{Z}{z}} } \\
&= \frac{1}{2} \sqrt{ |\cK| |\cV| \sum_{x,z} \frac{\bP{XZ}{x,z}^2}{\bQ{Z}{z}}} \\
&\le \frac{1}{2} \sqrt{ |\cK| |\cV| 2^{- H_{\min}(\bPP{XZ}| \bQQ{Z})}}.
\end{align*}
\qed

Note that the last step in the proof above shows that it is, in fact, the conditional
R\'enyi entropy of order $2$ that determines the leakage (see \cite{BenBraCreMau95} for a similar
observation). However, the weaker bound proved above suffices for our case, as it does for
many other cases (\cf \cite{Ren05}).
\subsection{A secret key agreement protocol requiring $1$-bit feedback}
\label{appendix:high_secrecy_protocol}

In this section, we present a secret key agreement protocol which requires only
$1$-bit of feedback for generating an $(\ep, \delta)$-SK, in the special 
case when $Z$ is a constant. The main component is a high secrecy protocol
 which achieves arbitrarily high secrecy and required reliability.
In contrast to Protocol~\ref{prot:high_reliability}, which relied on
slicing the spectrum of $\bPP{X|Y}$, the high secrecy protocol
is based on slicing the spectrum of $\bPP{X}$. Since the party
observing $X$ can determine the corresponding slice index, feedback
is not needed and one-way communication suffices. We then convert
this high secrecy protocol into a high reliability protocol, using
a $1$-bit feedback. The required protocol for generating an $(\ep, \delta)$-SK
is obtained by randomizing between the high secrecy and the high reliability protocols.
This protocol appeared in a conference version containing some of the results of this paper \cite{HayTyaWat14}, but was discovered independently by \cite{FullerSmithReyzin14} in a slightly different setting. 
Note that this is a different approach from the one used in Section~\ref{s:protocols}
where a high reliability protocol was constructed and Proposition~\ref{p:conversion}
was invoked to obtain a high secrecy protocol.

{\bf Description of the high secrecy protocol.} We now describe our protocol 
formally. The information reconciliation step of our protocol relies on
a single-shot version of the classical Slepian-Wolf theorem \cite{SleWol73}
in distributed source coding for two sources 
\cite{MiyKan95}, \cite[Lemma 7.2.1]{Han03} (see, also, \cite{Kuz12}).
We need a slight modification of the standard version -- the encoder
is still a random binning but for decoding, instead of using a ``typical-set''
decoder for the underlying distribution, we use a mismatched typical-set
decoder. We provide a proof for completeness.
\begin{lemma}[{{\bf Slepian-Wolf Coding}}]\label{l:slepian_wolf}
Given two distributions $\bPP{XY}$ and $\mathrm{Q}_{XY}$ on $\cX\times \cY$,
for every $\gamma > 0$
there exists a code $(e,d)$ of size $M$ with encoder 
$e: \cX \rightarrow \{1 ,..., M\},$
and a decoder
$d:\{1, ...,M\} \times \cY \rightarrow \cX,$
such that
\begin{align*}
\bP{XY}{\{(x,y): x \neq d(e(x), y)\}} \leq \bP{XY}{\{(x,y) : -\log \mathrm{Q}_{X\mid Y}(x\mid y) \geq \log M - \gamma\}} + 2^{-\gamma}.
\end{align*}
\end{lemma}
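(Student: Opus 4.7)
The plan is a single-shot random binning argument with a mismatched typical-set decoder, in the spirit of the classical Slepian-Wolf coding theorem.

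First, construct a random encoder $e$ by assigning each $x\in\cX$ independently and uniformly to one of $M$ bins. For the decoder, given a bin index $m$ and side information $y$, let $d(m,y)$ output the unique $x$ satisfying $e(x)=m$ and $(x,y)\in \cT$, where
\[
\cT := \left\{(x,y) : -\log \mathrm{Q}_{X|Y}(x|y) < \log M - \gamma\right\},
\]
and declare an error otherwise. Note that the decoder uses the mismatched distribution $\mathrm{Q}_{X|Y}$ to define typicality, rather than the true $\bPP{X|Y}$.

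Next, bound the expected error probability $\bEE{P_e}$ over the random choice of $e$. An error occurs only if either $(X,Y)\notin \cT$, or there exists some $\hat x \neq X$ with $e(\hat x)=e(X)$ and $(\hat x,Y)\in \cT$. Using a union bound and the independence of the bin assignments (so that $\bPr{e(\hat x)=e(x)} = 1/M$ for $\hat x\neq x$),
\[
\bEE{P_e} \leq \bP{XY}{\cT^c} + \sum_{x,y}\bP{XY}{x,y}\,\frac{1}{M}\,\bigl|\{\hat x\neq x: (\hat x,y)\in \cT\}\bigr|.
\]
The key counting step controls the inner cardinality: for each $\hat x$ with $(\hat x,y)\in \cT$, one has $\mathrm{Q}_{X|Y}(\hat x|y) > 2^{\gamma}/M$, so since $\mathrm{Q}_{X|Y}(\cdot|y)$ is a probability distribution,
\[
\bigl|\{x': (x',y)\in \cT\}\bigr| \leq M\cdot 2^{-\gamma}.
\]
Substituting yields $\bEE{P_e} \leq \bP{XY}{\cT^c} + 2^{-\gamma}$, and the probabilistic method furnishes a deterministic code $(e,d)$ attaining this bound, which is exactly the claim after rewriting $\cT^c = \{(x,y): -\log \mathrm{Q}_{X|Y}(x|y)\geq \log M - \gamma\}$.

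There is no real obstacle here; the only point requiring care is the counting argument bounding the mismatched ``typical shell'' at each $y$, which must use $\mathrm{Q}_{X|Y}$ rather than $\bPP{X|Y}$ so that the final error bound is expressed in terms of the $\mathrm{Q}$-conditional log-likelihood as stated. The rest is standard random-binning bookkeeping.
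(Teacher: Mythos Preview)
Your proof is correct and essentially identical to the paper's own argument: random binning encoder, mismatched typical-set decoder based on $\mathrm{Q}_{X|Y}$, union bound over the two error events, and the counting bound $|\{x':(x',y)\in\cT\}|\le M2^{-\gamma}$ followed by derandomization. The only cosmetic difference is that you spell out why the counting bound holds (using that $\mathrm{Q}_{X|Y}(\cdot|y)$ sums to one), which the paper leaves implicit.
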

{\it Proof.} We use a random encoder given by {random binning}, \ie, for each $x \in \cX$, we 
independently randomly assign $i=1,\ldots,M$. 
For the decoder, we use a typicality-like argument, but instead of using the standard typical set
defined via $\bPP{X|Y}$, we use the mismatched typical-set
\begin{align*}
\cT_{\bQQ{X|Y}} := \left\{ (x,y) : - \log \bQ{X|Y}{x|y} < \log M - \gamma \right\}.
\end{align*}
Then, upon receiving $i \in \{1,\ldots,M\}$, the decoder outputs $\hat{x}$ if there exists a unique 
$\hat{x}$ satisfying $e(\hat{x}) = i$ and $(\hat{x},y) \in \cT_{\bQQ{X|Y}}$. An error occur if $(X,Y) \notin  \cT_{\bQQ{X|Y}}$ or there exists $\tilde{x} \neq X$ such that 
$(\tilde{x},Y) \in \cT_{\bQQ{X|Y}}$ and $e(\tilde{x}) = e(X)$. The former error event 
occurs with probability
\begin{align*}
\bP{XY}{\{(x,y) : -\log \mathrm{Q}_{X\mid Y}(x\mid y) \geq \log M - \gamma\}}.
\end{align*}
The probability of the second error event averaged over the random binning is bounded as
\begin{align*}
& \mathbb{E}\left[ \dP\left( \exists \tilde{x} \neq X \mbox{ s.t. } e(\tilde{x}) = e(X),~
 (\tilde{x},Y) \in \cT_{\bQQ{X|Y}} \right) \right] \\
&\le \sum_{x,y} \bP{XY}{x,y} \mathbb{E}\left[  \sum_{\tilde{x} \neq x}  \indicator\left( e(\tilde{x}) = e(x) \right) \cdot 
  \indicator\left( (\tilde{x}, y) \in \cT_{\bQQ{X|Y}} \right) \right] \\
&= \sum_{x,y} \bP{XY}{x,y}  \sum_{\tilde{x} \neq x} \frac{1}{M} \indicator\left( (\tilde{x}, y) \in \cT_{\bQQ{X|Y}} \right) \\
&\le \sum_{x,y} \bP{XY}{x,y} 2^{-\gamma} \\
&= 2^{-\gamma},
\end{align*}
where the expectation is over the random
encoder $e$. The first inequality above is by the union bound, 
the first equality is a property of random binning,
and the second inequality follows from 
\begin{align*}
|\{ x: (x,y) \in \cT_{\bQQ{X|Y}} \}| \le M 2^{-\gamma}~~~\forall y \in \cY.
\end{align*}
Thus, there exists a code $(e,d)$ satisfying the desired bound. \qed

We are now in a position to describe our protocol, which is based on
slicing the spectrum of $\bPP X$. We first {slice the spectrum} of $\bPP{X}$ into $L+1$ parts.
Specifically, for $1\leq i \leq L$, let $\lambda_i = \lambda_{\min} + (i-1)\Delta$
and define
\begin{align} \label{eq:calXi}
\cX_i := \{x : \lambda_i \leq - \log \bP{X}{x} < \lambda_{i} + \Delta\}.
\end{align}
We also define
\begin{align} \label{eq:calX0}
 \cX_0 := \left\{ (x,y) : -\log\bP{X}{x} \geq \lambda_{\max} \text{ or } -\log\bP{X}{x} < \lambda_{\min}\right\}. 
\end{align}
Denote by $J$ the RV such that the event $\{J=j\}$ corresponds to $\cX_j$, $0\leq j \le L$. 
We divide the indices $0\leq j \leq L$ into ``good" indices $\cI_g$
and the ``bad" indices $\cI_b = \cI_g^c$, where
\begin{align*}
\cI_g &= \left\{j : j>0 \text{ and }\bP{J}{j} \geq \frac{1}{L^2}\right\}.
\end{align*}
Denote by $\bPP{j}$ the conditional distribution 
of $X, Y$ given $J =j$, \ie,
\begin{align*}
\bP{j}{x, y} = \frac{\bP{XY}{x, y}}{\bP{X}{\cX_j}}\mathbf{1}(x\in \cX_j), 
\quad   x \in \cX,~y \in \cY,~ 0\leq j \leq L.
\end{align*}
Note that $J$ is a function of $X$ and can be computed by the first party, \ie,  the party 
observing $X$. In our protocol below, the first party computes $J$ and sends it to the second
party as public communication. If $J \in {\cal I}_b$, the protocol declares a reconciliation error and 
aborts. Otherwise, the protocol generates a secret key conditioned on the event $\cX_J$.

For $1\le j \le L$, let $(e_j,d_j)$ be the Slepian-Wolf code of Lemma \ref{l:slepian_wolf} 
for $\bPP{XY} = \bPP{j}$ and $\mathrm{Q}_{XY} = \bPP{XY}$.  Further, 
let $\cF$ be a {$2$-universal family} of mappings $f: \cX\rightarrow \cK$, and
let $S$ be random seed such that $f_S$ denotes a randomly chosen member of $\cF$.

Our secret key agreement protocol is given in Protocol~\ref{p:high_secrecy_Z_constant}.

\begin{protocol}[H]
\caption{High secrecy protocol}
\label{p:high_secrecy_Z_constant}
\KwIn{Observations $X$ and $Y$} 
\KwOut{Secret key estimates $K_x$ and $K_y$}
\underline{Information reconciliation}
\\
First party (observing $X$) finds the index 
$J \in \{0, 1, ..., L\}$ such that $X \in \cX_J$
\\
\eIf{$J \in \cI_b$}
{
The protocol declares an error and aborts
}
{
First party sends $(J, e_J(X))$ to the second party
\\
Second party computes $\hat X = d_J(Y, e_J(X))$
\\
\underline{Privacy amplification}
\\
First party generates the random seed $S$ and
sends it to the second party using public communication
\\
First party generates the secret key $K_x = K= f_S(X)$ 
\\
The second party generates the estimate $K_y$ of $K$
as $K_y = f_S(\hat X)$
}
\end{protocol}

{\bf Performance bounds for Protocol~\ref{p:high_secrecy_Z_constant}.} The next result shows that
Protocol~\ref{p:high_secrecy_Z_constant} attains arbitrary high secrecy and required reliability.
\begin{theorem}\label{t:protocolA1}
For every $\gamma> 0$ and $0 \leq \lambda \leq \lambda_{\min}$, Protocol~\ref{p:high_secrecy_Z_constant}
yields an $(\ep, \delta)$-SK $K$ taking values in $\cK$ with
\begin{align*}
\ep &\leq \bPr{ \idXY  \leq \lambda + \gamma +\Delta} +\bP{XY}{\cX_0} +2^{-\gamma} + \frac{1}{L},
\\
\delta &\leq \frac{1}{2}\sqrt{|\cK|2^{-(\lambda - 2\log L)}},
\end{align*}
where, with $\la_{\max} = \la_{\min}+L\Delta$, $\cX_0$ is given by \eqref{eq:calX0}.
\end{theorem}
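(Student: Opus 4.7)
I would follow the template of Theorem~\ref{t:protocolA2-Z}, exploiting the fact that $J$ is now a deterministic function of $X$, so the first party announces it without further interaction. The crucial design choice is the bin size on slice $j$, which I would take to be
\[
\log M_j = \lambda_j - \lambda;
\]
this single choice simultaneously controls the Slepian--Wolf decoding error and the Leftover-Hash leakage, and is what makes $\lambda_j$ cancel from the secrecy exponent.

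For reliability, I would decompose the failure event into (i) the abort event $\{J \in \cI_b\}$ and (ii) a Slepian--Wolf decoding error on a good slice. For (i), $\cI_b = \{0\} \cup \{j > 0 : \bP{J}{j} < 1/L^2\}$, giving $\bPr{J \in \cI_b} \leq \bP{XY}{\cX_0} + 1/L$. For (ii), I condition on $J = j \in \cI_g$ and apply Lemma~\ref{l:slepian_wolf} with source $\bPP{j}$ and mismatched reference $\bQQ{XY} = \bPP{XY}$. Using $-\log \bP{X|Y}{X|Y} = -\log \bP{X}{X} - \idXY$ together with $-\log \bP{X}{X} < \lambda_j + \Delta$ on $\cX_j$, the tail event $\{-\log \bP{X|Y}{X|Y} \geq \log M_j - \gamma\}$ is implied by the $j$-independent event $\{\idXY \leq \lambda + \gamma + \Delta\}$. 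Averaging over $j \in \cI_g$ then yields the $\bPr{\idXY \leq \lambda + \gamma + \Delta} + 2^{-\gamma}$ contribution, and combining with (i) gives the bound on $\ep$.

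For secrecy, I would condition on each good slice $J = j \in \cI_g$, reducing $\bF$ to $e_j(X)$ of range size at most $M_j$ together with the independent seed $S$. The key min-entropy estimate is read off from the slice definition and the goodness condition: on $\cX_j$, $\bP{X}{x} \leq 2^{-\lambda_j}$, and $\bP{J}{j} \geq 1/L^2$ forces $\bP{j}{x} \leq 2^{-\lambda_j} L^2$, so $H_{\min}(\bPP{j,X}) \geq \lambda_j - 2\log L$. Invoking Lemma~\ref{l:leftover_hash} with $V = e_j(X)$ and constant $Z$,
\[
\ttlvrn{\bPP{K \bF S \mid J = j}}{\bPP{\mathtt{unif}} \bPP{\bF S \mid J = j}} \leq \frac{1}{2}\sqrt{|\cK|\, M_j\, 2^{-H_{\min}(\bPP{j,X})}} \leq \frac{1}{2}\sqrt{|\cK|\, 2^{-(\lambda - 2\log L)}},
\]
since $\log M_j - H_{\min}(\bPP{j,X}) \leq -\lambda + 2\log L$. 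Averaging this uniform bound over $j \in \cI_g$ delivers the claimed $\delta$.

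The main delicate point is precisely the bin-size choice above, which is the unique scaling (up to constants) that decouples the secrecy exponent from the spectrum position $\lambda_j$ while keeping the Slepian--Wolf threshold $\log M_j - \gamma$ aligned with the maximum log-likelihood on the slice. The min-entropy penalty $2\log L$ is smaller than the $3\log L$ appearing in Theorem~\ref{t:protocolA2-Z}; the improvement stems from not having to account for an ACK/NACK transcript and is the source of the ``high secrecy'' label attached to this one-way protocol.
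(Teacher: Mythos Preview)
Your proposal is correct and follows the paper's proof essentially step for step: the same bin-size choice $\log M_j=\lambda_j-\lambda$, the same mismatched Slepian--Wolf application with $\bQQ{XY}=\bPP{XY}$ to reduce the slice-dependent threshold to the slice-free event $\{\idXY\le\lambda+\gamma+\Delta\}$, and the same min-entropy estimate $H_{\min}(\bPP{j,X})\ge\lambda_j-2\log L$ feeding into Lemma~\ref{l:leftover_hash}. The only detail the paper makes explicit that you leave implicit is the convention $K=\mathtt{unif}(\cK)$ on the abort event $J\in\cI_b$, which is what makes the secrecy contribution from bad slices exactly zero and justifies averaging only over $\cI_g$.
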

{\it Proof.} 
To bound the error in information reconciliation, note that for all $j \in \cI_g$ 
by Lemma \ref{l:slepian_wolf} with $\bPP{XY} = \bPP{j}$ and $\mathrm{Q}_{XY} = \bPP{XY}$
\begin{align*}
&\bP{j}{\{(x,y):  x \neq d_j(e_j(x), y)\}} - 2^{-\gamma}
\\
&\leq \bP{j}{\{(x,y): -\log \mathrm{P}_{X\mid Y}(x\mid y) \geq \log M_j - \gamma\}} 
\\
&= \mathrm{P}_j
\left(\bigg\{(x,y) :\right. 
\left. - \log\bP{X}{x} - \idxy \geq \log M_j - \gamma\bigg\}\right)
\\
&\leq \mathrm{P}_j
\left(\bigg\{(x,y) :\right. 
 \left. \lambda_j + \Delta - \idxy \geq \log M_j - \gamma\bigg\}\right),
\end{align*}
where the previous inequality uses the definition of $\bPP{j}$ and \eqref{eq:calXi}. 
On choosing 
\begin{align*}
\log M_j = \lambda_j - \lambda,
\end{align*}
we get
\begin{align*}
\bP{j}{\{(x,y) : x \neq d_j(e_j(x), y)\}} 
\leq \bP{j}{\left\{(x,y) : \idxy \leq \lambda + \gamma + \Delta\right\}} + 2^{-\gamma}.
\end{align*}
An error in information reconciliation occurs if
either $J \notin \cI_g$ or if $j\in \cI_g$ and $X \neq d_j(e_j(X),Y)$.
From the bound above
\begin{align*}
\ep &\leq \bP{XY}{J\notin \cI_g}  + 2^{-\gamma}
+ 
\sum_{j \in \cI_g}\bP{J}{j}
\bP{j}{\left\{(x,y): \idxy \leq \lambda + \gamma + \Delta\right\}} 
\\
&\leq \bP{XY}{J\notin \cI_g}  + 2^{-\gamma} 
 +\bP{XY}{\left\{(x,y) : \idxy \leq \lambda + \gamma + \Delta\right\}},
\end{align*}
which using
\begin{align*}
\bP{J}{\cI_b} = \sum_{j\in \cI_b}\bP{J}{j} \leq \bP{J}{0} + \frac{1}{L}
\end{align*}
gives
\begin{align*}
\ep \leq \bP{X}{\cX_0}  + 2^{-\gamma} + \frac{1}{L}
 +\bP{XY}{\left\{(x,y): \idxy \leq \lambda + \gamma + \Delta\right\}},
\end{align*}
proving the reliability bound of the theorem. 

We proceed to secrecy analysis. Note that the protocol only defines 
the secret key for the case $J\in \cI_g$. For concreteness, let
\begin{align*}
K =\begin{cases}
f_S(X),\quad &J \in \cI_g,\\
\mathtt{unif}(\cK),\quad &\text{otherwise},
\end{cases}
\end{align*}
$K$ is perfectly secure when $J \in \cI_b$. Denoting the communication $(J, e_J(X))$ by $\bF$, we get
\begin{align}
&\ttlvrn{\bPP{K\bF S}}{\bPP{\mathtt{unif}}\bPP{\bF S}}
\nonumber
\\
&=\sum_{j\notin \cI_g}\bP{J}{j} \cdot 0  + \sum_{j\in \cI_g}\bP{J}{j} \cdot \ttlvrn{\bPP{Ke_J(X)S|J=j}}
{ \bPP{\mathtt{unif}}\bPP{e_J(X)S|J=j}}.
\label{e:secrecy_bound1}
\end{align}
To bound $\ttlvrn{\bPP{Ke_j(X_j)S|J=j}}{ \bPP{\mathtt{unif}}\bPP{e_J(X)S|J=j}}$, denote by $\bPP{j,X}$ the marginal on $\cX$ induced by $\bPP{j}$.
Note that for each $j \in \cI_g$
\begin{align*}
 - \log\bP{j,X}{x} &= -\log \frac{\bP{X}{x}}{\bP{X}{\cX_j}} 
 \\
 &\geq \lambda_j - 2\log L,
\end{align*}
where the last inequality uses the definition of $\cX_j$
and $\cI_g$. It follows that 
\begin{align*}
H_{\min}\left(\bPP{j,X}\right) \geq \lambda_j - 2\log L.
\end{align*}
Therefore, upon noting that $S$ is independent of $(X,Y,J)$ even upon conditioning 
on $J=j$, for each $j\in \cI_g$ an application of Lemma \ref{l:leftover_hash} 
implies that
\begin{align*}
\ttlvrn{\bPP{Ke_j(X)S|J=j}}{ \bPP{\mathtt{unif}}\bPP{e_J(X)S|J=j}}
&\leq \frac{1}{2} \sqrt{|\cK|M_j2^{-H_{\min}\left(\bPP{j,X}\right)}}
\nonumber
\\
&\le \frac{1}{2} \sqrt{|\cK|2^{-(\lambda - 2\log L)}}, \quad j\in \cI_g,
\end{align*}
which together with \eqref{e:secrecy_bound1} gives
\[
\ttlvrn{\bPP{K\bF S}}{\bPP{\mathtt{unif}}\bPP{\bF S}}\leq \frac{1}{2} \sqrt{|\cK|2^{-(\lambda - 2\log L)}},
\]
which in turn proves the secrecy bound claimed in the theorem.
\qed

{\bf From high secrecy protocol to a high reliability protocol.}
By Theorem~\ref{t:protocolA1}, the secrecy parameter $\delta$ of 
Protocol~\ref{p:high_secrecy_Z_constant} can be made 
small by choosing $\log |\cK|\approx \lambda$, but its reliability
parameter $\ep$ is limited by the tail-probability $\bPr{ \idXY  \leq \lambda}$.
Thus, in contrast to Protocol~\ref{prot:high_reliability},
Protocol~\ref{p:high_secrecy_Z_constant} constitutes a high
secrecy protocol. Note that while any high reliability protocol can be converted into a high secrecy protocol
using Proposition \ref{p:conversion}, it is unclear if a high secrecy protocol
can be converted to a high reliability protocol in general. However, high secrecy 
Protocol~\ref{p:high_secrecy_Z_constant} can be converted into
a high reliability protocol as follows: The second party upon decoding $X$ computes the
indicator
of the error event $\cE_j := \{-\log \bP{X\mid Y}{X\mid Y} \geq \log M_j - \gamma\}$ 
and sends it back to the first
party. If $\cE_j$ doesn't occur, the secret key $K$ is as in the protocol above. Otherwise, $K$ is chosen to be a constant. 
For this modified secret key, the event $\cE_j$ is accounted for in the secrecy parameter $\delta$ and not in $\ep$ 
as earlier. Thus, Theorem~\ref{t:protocolA1} holds for the modified secret key where
the leading term $\bPr{ \idXY  \leq \lambda + \gamma +\Delta}$ is moved from the upper bound
on $\ep$ to that on $\delta$, and the resulting protocol has 
high reliability. Furthermore, the high reliability protocol uses just $1$-bit of feedback
from the second party to the first. 

Finally, a protocol for generating an arbitrary $(\ep, \delta)$-SK can be obtained by
a hybrid use of the high reliability and high secrecy protocols as in 
Proposition~\ref{p:time_sharing}.

\section*{Acknowledgments}
MH is partially supported by a MEXT Grant-in-Aid for Scientific Research (A) No. 23246071. 
MH is also partially supported by the National Institute of
Information and Communication Technology (NICT), Japan.
The Centre for Quantum Technologies is funded by the Singapore Ministry
of Education and the National Research Foundation as part of
the Research Centres of Excellence program.
\bibliography{IEEEabrv,references}

\begin{thebibliography}{10}
\providecommand{\url}[1]{#1}
\csname url@samestyle\endcsname
\providecommand{\newblock}{\relax}
\providecommand{\bibinfo}[2]{#2}
\providecommand{\BIBentrySTDinterwordspacing}{\spaceskip=0pt\relax}
\providecommand{\BIBentryALTinterwordstretchfactor}{4}
\providecommand{\BIBentryALTinterwordspacing}{\spaceskip=\fontdimen2\font plus
\BIBentryALTinterwordstretchfactor\fontdimen3\font minus
  \fontdimen4\font\relax}
\providecommand{\BIBforeignlanguage}[2]{{%
\expandafter\ifx\csname l@#1\endcsname\relax
\typeout{** WARNING: IEEEtranS.bst: No hyphenation pattern has been}%
\typeout{** loaded for the language `#1'. Using the pattern for}%
\typeout{** the default language instead.}%
\else
\language=\csname l@#1\endcsname
\fi
#2}}
\providecommand{\BIBdecl}{\relax}
\BIBdecl

\bibitem{AhlCsi93}
R.~Ahlswede and I.~Csisz{\'a}r, ``Common randomness in information theory and
  cryptography--part {I}: Secret sharing,'' \emph{{IEEE} Trans. Inf. Theory},
  vol.~39, no.~4, pp. 1121--1132, July 1993.

\bibitem{AhlCsi98}
------, ``Common randomness in information theory and cryptography--part {II}:
  {CR} capacity,'' \emph{{IEEE} Trans. Inf. Theory}, vol.~44, no.~1, pp.
  225--240, January 1998.

\bibitem{BenBraCreMau95}
C.~H. Bennett, G.~Brassard, C.~Cr{\'e}peau, and U.~M. Maurer, ``Generalized
  privacy amplification,'' \emph{{IEEE} Trans. Inf. Theory}, vol.~41, no.~6,
  pp. 1915--1923, November 1995.

\bibitem{CsiNar04}
I.~Csisz{\'a}r and P.~Narayan, ``Secrecy capacities for multiple terminals,''
  \emph{{IEEE} Trans. Inf. Theory}, vol.~50, no.~12, pp. 3047--3061, December
  2004.

\bibitem{CsiNar08}
------, ``Secrecy capacities for multiterminal channel models,'' \emph{{IEEE}
  Trans. Inf. Theory}, vol.~54, no.~6, pp. 2437--2452, June 2008.

\bibitem{DodOstReySmi08}
Y.~Dodis, R.~Ostrovsky, L.~Reyzin, and A.~Smith, ``Fuzzy extractors: How to
  generate strong keys from biometrics and other noisy data,'' \emph{SIAM
  Journal on Computing}, vol.~38, no.~1, pp. 97--139, 2008.

\bibitem{Fel71}
W.~Feller, \emph{An Introduction to Probability Theory and its Applications,
  Volume II. 2nd edition}.\hskip 1em plus 0.5em minus 0.4em\relax John Wiley \&
  Sons Inc., UK, 1971.

\bibitem{FullerSmithReyzin14}
B.~Fuller, A.~D. Smith, and L.~Reyzin, ``Where are fuzzy extractors possible?''
  \emph{{IACR} Cryptology ePrint Archive}, vol. 2014, p. 961, 2014.

\bibitem{GohAna10}
A.~A. Gohari and V.~Anantharam, ``Information-theoretic key agreement of
  multiple terminals: Part i,'' \emph{{IEEE} Trans. Inf. Theory}, vol.~56,
  no.~8, pp. 3973 -- 3996, August 2010.

\bibitem{Han03}
T.~S. Han, \emph{Information-Spectrum Methods in Information Theory [English
  Translation]}.\hskip 1em plus 0.5em minus 0.4em\relax Series: Stochastic
  Modelling and Applied Probability, Vol. 50, Springer, 2003.

\bibitem{HanVer93}
T.~S. Han and S.~Verd\'{u}, ``Approximation theory of output statistics,''
  \emph{{IEEE} Trans. Inf. Theory}, vol.~39, no.~3, pp. 752--772, May 1993.

\bibitem{Hayashi06}
M.~Hayashi, \emph{Quantum Information: An Introduction}.\hskip 1em plus 0.5em
  minus 0.4em\relax Springer, 2006.

\bibitem{Hayashi08}
------, ``Second-order asymptotics in fixed-length source coding and intrinsic
  randomness,'' \emph{{IEEE} Trans. Inf. Theory}, vol.~54, no.~10, pp.
  4619--4637, Oct 2008.

\bibitem{Hay09}
------, ``Information spectrum approach to second-order coding rate in channel
  coding,'' \emph{{IEEE} Trans. Inf. Theory}, vol.~55, no.~11, pp. 4947--4966,
  Novemeber 2009.

\bibitem{HayNag03}
M.~Hayashi and H.~Nagaoka, ``General formulas for capacity of classical-quantum
  channels,'' \emph{{IEEE} Trans. Inf. Theory}, vol.~49, no.~7, pp. 1753--1768,
  July 2003.

\bibitem{HayTyaWat14}
M.~Hayashi, H.~Tyagi, and S.~Watanabe, ``Secret key agreement: General capacity
  and second-order asymptotics,'' \emph{Proc. {IEEE} International Symposium on
  Information Theory}, pp. 1136--1140, 2014.

\bibitem{ImpLevLub89}
R.~Impagliazzo, L.~A. Levin, and M.~Luby, ``Pseudo-random generation from
  one-way functions,'' in \emph{Proc. ACM Symposium on Theory of Computing
  (STOC)}, 1989, pp. 12--24.

\bibitem{Kuz12}
S.~Kuzuoka, ``On the redundancy of variable-rate slepian-wolf coding,''
  \emph{Proc. International Symposium on Information Theory and its
  Applications (ISITA)}, pp. 155--159, 2012.

\bibitem{Mau93}
U.~M. Maurer, ``Secret key agreement by public discussion from common
  information,'' \emph{{IEEE} Trans. Inf. Theory}, vol.~39, no.~3, pp.
  733--742, May 1993.

\bibitem{MiyKan95}
S.~Miyake and F.~Kanaya, ``Coding theorems on correlated general sources,''
  \emph{IIEICE Trans. Fundamental}, vol. E78-A, no.~9, pp. 1063--1070,
  September 1995.

\bibitem{Nagaoka01}
H.~Nagaoka, ``Strong converse theorems in quantum information theory,'' in
  \emph{ERATO Workshop on Quantum Information Science 2001,Univ. Tokyo, Tokyo,
  Japan, September 6-8}, 2001, p.~33.

\bibitem{Pol10}
Y.~Polyanskiy, ``Channel coding: non-asymptotic fundamental limits,'' \emph{Ph.
  D. Dissertation, Princeton University}, 2010.

\bibitem{PolPooVer10}
Y.~Polyanskiy, H.~V. Poor, and S.~Verd{\'u}, ``Channel coding rate in the
  finite blocklength regime,'' \emph{{IEEE} Trans. Inf. Theory}, vol.~56,
  no.~5, pp. 2307--2359, May 2010.

\bibitem{Ren05}
R.~Renner, ``Security of quantum key distribution,'' \emph{Ph. D. Dissertation,
  ETH Zurich}, 2005.

\bibitem{RenWol05}
R.~Renner and S.~Wolf, ``Simple and tight bounds for information reconciliation
  and privacy amplification,'' in \emph{Proc. ASIACRYPT}, 2005, pp. 199--216.

\bibitem{Shevtsova11}
I.~Shevstova, ``On the absolute constants in the {B}erry-{E}sseen type
  inequalities for identically distributed summands,'' \emph{CoRR}, vol.
  arXiv:1111.6554, 2011.

\bibitem{SleWol73}
D.~Slepian and J.~Wolf, ``Noiseless coding of correlated information source,''
  \emph{{IEEE} Trans. Inf. Theory}, vol.~19, no.~4, pp. 471--480, July 1973.

\bibitem{Str62}
V.~Strassen, ``Asymptotische absch\"atzungen in {S}hannon's
  informationstheorie,'' \emph{Third Prague Conf. Inf. Theory}, pp. 689--723,
  1962, {E}nglish translation:
  \url{http://www.math.cornell.edu/~pmlut/strassen.pdf}.

\bibitem{Strassen65}
------, ``The existence of probability measures with given marginals,''
  \emph{The Annals of Mathematical Statistics}, vol.~36, no.~2, pp. 423--439,
  04 1965.

\bibitem{TyaThesis}
H.~Tyagi, ``Common randomness principles of secrecy,'' \emph{Ph. D.
  Dissertation, Univeristy of Maryland, College Park}, 2013.

\bibitem{TyaNar13ii}
H.~Tyagi and P.~Narayan, ``How many queries will resolve common randomness?''
  \emph{{IEEE} Trans. Inf. Theory}, vol.~59, no.~9, pp. 5363--5378, September
  2013.

\bibitem{TyaWat14ii}
H.~Tyagi and S.~Watanabe, ``Converses for secret key agreement and secure
  computing,'' \emph{{IEEE} Trans. Inf. Theory}, vol.~61, pp. 4809--4827, 2015.

\bibitem{TyaWat14}
------, ``A bound for multiparty secret key agreement and implications for a
  problem of secure computing,'' in \emph{EUROCRYPT}, 2014, pp. 369--386.

\bibitem{WanRen12}
L.~Wang and R.~Renner, ``One-shot classical-quantum capacity and hypothesis
  testing,'' \emph{Phys. Rev. Lett.}, vol. 108, no.~20, p. 200501, May 2012.

\bibitem{WatMatUye07}
S.~Watanabe, R.~Matsumoto, and T.~Uyematsu, ``Key rate of quantum key
  distribution with hashed two-way classical communication,'' \emph{Physical
  Review A}, vol.~76, no.~3, p. 032312, September 2007.

\end{thebibliography}
\bibliographystyle{IEEEtranS}

\end{document}